\newcommand{\F}{\mathbb{F}}
\newcommand{\Q}{\mathbb{Q}}
\newcommand{\C}{\mathbb{C}}
\newcommand{\R}{\mathbb{R}}
\newcommand{\ZZ}{\mathbb{Z}}
\newcommand{\OO}{\mathcal{O}}
\providecommand{\abs}[1]{\ensuremath{\left\lvert #1 \right\rvert}}
\providecommand{\norm}[1]{\ensuremath{\left\Vert #1 \right\Vert}}
\providecommand{\vv}[1]{\textquotedblleft #1\textquotedblright}
\DeclareMathOperator*{\Vol}{Vol}
\newtheorem{lem}{Lemma}
\theoremstyle{definition}
\newtheorem{rem}{Remark}
\newtheorem{prop}{Proposition}
\newtheorem{defi}{Definition}
\newtheorem{theorem}{Theorem}
\newtheorem{cor}{Corollary}
\begin{document}
\title{Number field lattices achieve Gaussian and Rayleigh channel capacity within a constant gap}
\author{ 
\IEEEauthorblockN{Roope Vehkalahti}
\IEEEauthorblockA{Department of Mathematics and Statistics,
University of Turku\\
Finland\\
roiive@utu.fi}
\and
\IEEEauthorblockN{Laura Luzzi}
\IEEEauthorblockA{Laboratoire ETIS
 (ENSEA - UCP - CNRS) \\
Cergy-Pontoise, France \\
laura.luzzi@ensea.fr}

}

\maketitle
\begin{abstract}
This paper shows that a family of number field lattice codes simultaneously achieves a constant gap to capacity in Rayleigh fast fading and Gaussian channels. 
The key property in the proof is the existence of infinite towers of Hilbert class fields with bounded root discriminant. The gap to capacity of the proposed lattice codes is determined by the root discriminant. \\ 
The comparison between the Gaussian and fading case reveals that in Rayleigh fading channels the \emph{normalized minimum product distance} plays an analogous role to the Hermite invariant in Gaussian channels.
\end{abstract}
\section{Introduction}
The classical problem of achieving the capacity of the Gaussian channel using structured codes has seen significant recent advances.
In particular, random lattice code ensembles have been shown to attain capacity \cite{DeBuda, Urbanke_Rimoldi}. Good lattice code ensembles can be constructed by lifting linear codes over finite fields \cite{Loeliger,Erez_Zamir} or using multilevel codes \cite{Forney_Trott_Chung}; an explicit multilevel construction from polar codes was recently proposed in \cite{Yan_Ling_Wu}. \\   
In this paper, we consider an alternative approach based on algebraic number theory. It is well-known that lattice constellations from number fields provide good performance on Gaussian and fading channels \cite{BERB, FOV}. As far as we know, the problem of achieving ergodic capacity with structured codes is still open in the case of fading channels. \\
In this work, we analyze the asymptotic behavior of algebraic lattices from number fields when the lattice dimension tends to infinity, and  
%As was already pointed out in \cite{Xing}, it is possible to obtain asymptotically good sequences of lattice codes for fading channels from infinite towers of Hilbert class fields with bounded root discriminant. In this paper, we 
show that Hilbert class field towers with bounded root discriminants simultaneously reach a constant gap to capacity on both Gaussian and Rayleigh fading channels.  \\
We note that the constant gap to capacity is achieved not only using ML decoding, but also with simple naive lattice decoding. \\
%We point out that 
While we discuss specific number field lattices, our proofs do work for any ensemble of lattices with asymptotically good product distance.  The larger the product distance, the smaller the gap to the capacity in the fast fading channel. 
%This forms a complete analogue to the Gaussian case, where the normalized product distance is replaced with the Hermite constant.

%Our  results contrast with the common view, where 
In the existing literature, the product distance is  mostly seen as a rough tool to estimate the worst case pairwise error probability in the  high SNR regime. Instead we will see that when we are allowed to decode and encode over a growing number of time units the normalized product distance will play a role of  an equal importance to the Hermite constant in Gaussian channels. We point out that the study of normalized product distance and Hermite invariant are both examples of  
%the same general  question 
the more general problem of finding the minima of homogeneous forms in the  mathematical field of \emph{geometry of numbers}. This seems to be a  universal theme, where each fading channel model is linked to a natural problem in geometry of numbers. We will elaborate further on this topic in \cite{LV2014}, where we also extend our capacity results to the MIMO context.\\
%It is intriguing that they also play 
%In the end we will also shortly discuss the
The families of number fields we consider were first brought to coding theory in \cite{LT}, where the authors pointed out that the corresponding lattices  have large Hermite constant. Our proof for the Gaussian channel is therefore an obvious corollary to this result. In \cite{Xing} it was pointed out that these families of number fields provide the best known normalized product distance.

% These results prove that the name of the article \cite{BERB} is well warranted.

%\subsection{Lattices, product distance and Hermite invariant}
\section{Notation and preliminaries}

In this section we will use the notation $\F$ for the field $\R$ or $\C$.\\
A   {\em lattice} $L \subset \F^n$ has the form $L=\ZZ x_1\oplus \ZZ x_2\oplus \cdots \oplus \ZZ x_{k}$, where the vectors $x_1,\dots, x_{k}$ are linearly independent over $\R$, i.e., form a lattice basis.

\begin{defi}\label{euclidean norm}
Let $v=(v_1, ...,v_n)$ be a vector in $\F^n$.
The \emph{Euclidean norm}   of $v$ is $||v||_E=\sqrt{\sum_{i=1}^n |v_i|^2}$. If $L$ is a   lattice in $\F^n$,  the \emph{minimum distance} $\mathrm{sv}(L)$ of $L$
is defined to be the infimum of the  Euclidean  norms of   all non-zero vectors in the lattice.
\end{defi}

\begin{defi}\label{tulo}
Let $v=(v_1,...,v_n)$ be a vector in $\F^n$.
We define  the \emph{ product norm} of $v$ as
$
n(v)=\prod_{i=1}^n |v_i|.
$

Assuming that $n(v)\neq 0$ for all the non zero elements $v \in L$, we can define the \emph{minimum product distance} $\mathrm{d}_{\mathrm{p, min}}(L)$ of $L$
to be the infimum of the product norms of all non-zero vectors in the lattice.
\end{defi}
We will use the notation $\Vol(L)$ for the volume of the fundamental parallelotope of the lattice $L$.

We  denote by $\mathrm{Nd}_{\mathrm{p, min}}(L)$ the {\em normalized minimum product distance} of the lattice $L$, i.e. here we first scale $L$ to have a unit size fundamental parallelotope and then take $\mathrm{d}_{\mathrm{p, min}}(L')$ of the resulting lattice $L'$. In the same way we can define the normalized
shortest vector of $L$ and denote it with $\mathrm{Nsv}(L)$. The square of the normalized shortest vector is called the \emph{Hermite invariant} of the lattice.

We then have the following scaling laws.
If $L$ is a full lattice in $\C^n$, then
%,  $\mathrm{d}_{\mathrm{p, min}}(L)=a$ and $sv(L)=b$. Then
$$
\mathrm{Nd}_{\mathrm{p, min}}(L)=\frac{\mathrm{d}_{\mathrm{p, min}}(L)}{\Vol(L)^{1/2}}, \quad \mathrm{Nsv}(L)=\frac{\mathrm{sv}(L)}{\Vol(L)^{1/2n}}.
$$
In the case of a real lattice $L\subset \R^n$ we have 
$$
\mathrm{Nd}_{\mathrm{p, min}}(L)=\frac{\mathrm{d}_{\mathrm{p, min}}(L)}{\Vol(L)},\quad \mathrm{Nsv}(L)=\frac{\mathrm{sv}(L)}{\Vol(L)^{1/n}}.
$$

These two concepts are related by the following simple and well known application of the arithmetic-geometric mean inequality.
\begin{prop}\label{simple}
Let $L$ be a lattice in $\F^n$. Then
$$
\mathrm{Nd}_{\mathrm{p, min}}(L)\leq \frac{\mathrm{Nsv}(\phi(L))^n}{n^{n/2}}.
$$
\end{prop}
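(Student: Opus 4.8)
The plan is to obtain the bound from the arithmetic--geometric mean inequality applied to the squared moduli of the coordinates of a shortest vector, and then to convert the resulting inequality between $\mathrm{d}_{\mathrm{p,min}}$ and $\mathrm{sv}$ into one between the normalized quantities by invoking the scaling laws recorded just above the proposition.

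First I would pick a nonzero vector $v=(v_1,\dots,v_n)$ of $\phi(L)$ whose Euclidean norm realizes the minimum distance, i.e. $\norm{v}_E=\mathrm{sv}(\phi(L))$; such a vector exists because a lattice is a discrete closed subset, so the infimum defining $\mathrm{sv}$ is attained. Applying the AM--GM inequality to the nonnegative reals $|v_1|^2,\dots,|v_n|^2$ gives
$$
n(v)^2=\prod_{i=1}^n|v_i|^2\le\left(\frac{1}{n}\sum_{i=1}^n|v_i|^2\right)^{n}=\frac{\norm{v}_E^{\,2n}}{n^{n}},
$$
hence $n(v)\le \norm{v}_E^{\,n}/n^{n/2}=\mathrm{sv}(\phi(L))^n/n^{n/2}$. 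Since $\mathrm{d}_{\mathrm{p,min}}$ is an infimum over all nonzero lattice vectors, in particular $\mathrm{d}_{\mathrm{p,min}}(\phi(L))\le n(v)$, so
$$
\mathrm{d}_{\mathrm{p,min}}(\phi(L))\le\frac{\mathrm{sv}(\phi(L))^n}{n^{n/2}}.
$$

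Next I would normalize. In the complex case the scaling laws give $\mathrm{Nsv}(\phi(L))^n=\mathrm{sv}(\phi(L))^n/\Vol(L)^{1/2}$ and $\mathrm{Nd}_{\mathrm{p,min}}(L)=\mathrm{d}_{\mathrm{p,min}}(\phi(L))/\Vol(L)^{1/2}$ (the normalized quantities being unchanged by the identification $\phi$), so dividing the last displayed inequality by $\Vol(L)^{1/2}$ yields exactly $\mathrm{Nd}_{\mathrm{p,min}}(L)\le\mathrm{Nsv}(\phi(L))^n/n^{n/2}$. The real case is handled identically, with $\Vol(L)^{1/2}$ replaced by $\Vol(L)$; the two scaling laws again combine to the same conclusion.

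I do not expect a genuine obstacle here: the statement is essentially AM--GM plus bookkeeping. The only points requiring a little care are that the minimum distance is actually attained, so one may argue with a concrete shortest vector rather than a limiting sequence, and that the real and complex normalizations involve different powers of $\Vol(L)$, so the two cases must be verified separately even though they produce the identical inequality. As a side remark, equality holds precisely when a shortest vector has all coordinates of equal modulus.
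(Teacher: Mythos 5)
Your argument is correct and is exactly the route the paper has in mind: the paper gives no written proof but describes the proposition as a simple application of the arithmetic--geometric mean inequality, which is precisely your AM--GM step on a shortest vector followed by the normalization bookkeeping via the stated scaling laws (handled correctly in both the real and complex cases).
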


The following Lemma \cite{GL} is useful in order to choose lattice constellations with prescribed minimum size. 

%\begin{lem}[see \cite{GL}]\label{shift}
\begin{lem} \label{shift}
Let us suppose that $L$ is a full lattice in  $\F^n$  and  $S$  a Jordan measurable bounded subset of $\F^n$. Then there exists $x\in \F^n$ such that
$$
|(L+x)\cap S|\geq\frac{\mathrm{Vol}(S)}{\mathrm{Vol}(L)}.
$$
\end{lem}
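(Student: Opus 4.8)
The plan is to run the classical averaging (pigeonhole) argument over a fundamental domain of $L$. First I would fix a fundamental parallelotope $D$ of the lattice $L$, so that $\Vol(D)=\Vol(L)$ and the translates $\{D+\ell\}_{\ell\in L}$ tile $\F^n$ up to a set of measure zero. For $x\in\F^n$ set
$$
N(x)=|(L+x)\cap S|=\sum_{\ell\in L}\mathbf 1_S(x+\ell).
$$
Since $S$ is bounded, for each fixed $x$ only finitely many of the points $x+\ell$ can lie in $S$, so $N(x)$ is a well-defined nonnegative integer.

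Next I would compute the mean value of $N$ over $D$. Because $S$ is Jordan measurable, $\mathbf 1_S$ is Riemann (hence Lebesgue) integrable with $\int_{\F^n}\mathbf 1_S=\Vol(S)<\infty$. Applying Tonelli's theorem to interchange the nonnegative sum with the integral, and then the translation substitution $y=x+\ell$, one gets
$$
\int_D N(x)\,dx=\sum_{\ell\in L}\int_D\mathbf 1_S(x+\ell)\,dx=\sum_{\ell\in L}\int_{D+\ell}\mathbf 1_S(y)\,dy=\int_{\F^n}\mathbf 1_S(y)\,dy=\Vol(S),
$$
where the third equality uses that $\{D+\ell\}_{\ell\in L}$ partitions $\F^n$ up to measure zero. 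Hence the average of $N$ over $D$ equals $\Vol(S)/\Vol(D)=\Vol(S)/\Vol(L)$.

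Finally, a (nonnegative, integrable) function cannot be everywhere strictly smaller than its own average over $D$, so there exists $x\in D\subseteq\F^n$ with $N(x)\geq \Vol(S)/\Vol(L)$, which is exactly the claimed bound. (Since $N$ is integer-valued one could even replace the right-hand side by $\lceil \Vol(S)/\Vol(L)\rceil$, but this refinement is not needed here.)

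I expect the only step needing genuine care to be the justification of the interchange of summation and integration together with the exact tiling identity: one must choose the fundamental set $D$ so that $\{D+\ell\}_{\ell\in L}$ is a measurable partition of $\F^n$, and then invoke Tonelli. The Jordan measurability hypothesis on $S$ is precisely what guarantees that $\mathbf 1_S$ is integrable, so that all the integrals above are finite and meaningful; everything else is routine bookkeeping.
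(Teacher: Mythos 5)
Your averaging argument is correct: bounding $N(x)=\sum_{\ell\in L}\mathbf 1_S(x+\ell)$ below by its mean over a fundamental parallelotope, computed via Tonelli and the tiling of $\F^n$ by translates of $D$, is exactly the classical proof of this pigeonhole/shifting lemma. The paper itself gives no proof—it cites the result from Gruber–Lekkerkerker—and your argument is precisely the standard one found there, so there is nothing to add.
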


\section{Lattice codes from number fields}
 In the following we will will describe the standard method  to build lattice codes from number fields \cite{BERB}.
%We will use $K$ for an algebraic number field. The discriminant of a number field we will denote with $d_K$. 
We will denote the discriminant of a number field $K$ with $d_K$. For every number field it is a non-zero integer.
\subsection{Complex constellations}\label{complex_sec}
Let $K/\Q$ be  a totally complex extension of degree $2n$ and $\{\sigma_1,\dots,\sigma_n\}$ be a set  of  $\Q$-embeddings, such that we have chosen one from each complex conjugate pair. Then we can define a
\emph{relative canonical embedding} of $K$ into $\C^n$ by
$$
\psi(x)=(\sigma_1(x),\dots, \sigma_n(x)).
$$
The ring of algebraic integers $\OO_K$ has a  $\ZZ$-basis $W=\{w_1,\dots ,w_{2n}\}$ and $\psi(W)$ is a $\ZZ$-basis for the full  lattice $\psi(\OO_K)$ in $\C^n$.

\begin{lem}\label{complex}
Let  $K/\Q$ be an extension of degree $2n$ and  let $\psi$ be the relative canonical embedding. Then
$$
\mathrm{Vol}(\psi(\mathcal{O}_K))=2^{-n}\sqrt{|d_K|} 
$$
$$
\mathrm{{Nd}_{p,min}}(\psi(\mathcal{O}_K))=\frac{2^{\frac{n}{2}}}{|d_K|^{\frac{1}{4}}} \,\, \mathrm{and}\,\,   \mathrm{Nsv}(\psi(\mathcal{O}_K)=\frac{\sqrt{2n}}{|d_K|^{1/4n}}.
$$
\end{lem}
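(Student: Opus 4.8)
The plan is to deduce all three identities from two elementary inputs: that the covolume of the full canonical embedding of $\mathcal{O}_K$ is controlled by $d_K$, and that the rational integer $1\in\mathcal{O}_K$ is simultaneously the extremal point for the product norm and, via an AM--GM estimate, for the Euclidean norm.

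For the volume I would fix a $\mathbb{Z}$-basis $W=\{w_1,\dots,w_{2n}\}$ of $\mathcal{O}_K$ and form the $2n\times 2n$ matrix $M$ whose rows apply all $2n$ embeddings of $K$, namely $\sigma_1,\dots,\sigma_n$ together with their conjugates $\bar\sigma_1,\dots,\bar\sigma_n$, to the $w_j$. Then $MM^{T}=(\mathrm{Tr}_{K/\mathbb{Q}}(w_iw_j))_{i,j}$, so $(\det M)^{2}=\det(\mathrm{Tr}_{K/\mathbb{Q}}(w_iw_j))=d_K$ and $|\det M|=\sqrt{|d_K|}$. Passing from this all-embeddings description to $\psi(\mathcal{O}_K)\subset\mathbb{C}^{n}$ with its standard real structure amounts to replacing each conjugate pair $(\sigma_i(x),\bar\sigma_i(x))$ by $(\mathrm{Re}\,\sigma_i(x),\mathrm{Im}\,\sigma_i(x))$; this is a real-linear change of coordinates of determinant $\tfrac12$ on each of the $n$ complex slots, hence of determinant $2^{-n}$ overall. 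Therefore $\mathrm{Vol}(\psi(\mathcal{O}_K))=2^{-n}|\det M|=2^{-n}\sqrt{|d_K|}$, which is the first claim.

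For the product distance, note that for every nonzero $x\in\mathcal{O}_K$ the absolute norm splits as $|N_{K/\mathbb{Q}}(x)|=\prod_{i=1}^{n}|\sigma_i(x)|\,|\bar\sigma_i(x)|=\big(\prod_{i=1}^{n}|\sigma_i(x)|\big)^{2}=n(\psi(x))^{2}$. Since $N_{K/\mathbb{Q}}(x)$ is a nonzero rational integer this forces $n(\psi(x))\geq 1$, with equality at $x=1$, so $\mathrm{d}_{p,min}(\psi(\mathcal{O}_K))=1$. For the shortest vector, applying the arithmetic--geometric mean inequality to the squared coordinates gives $\|\psi(x)\|_E^{2}=\sum_{i=1}^{n}|\sigma_i(x)|^{2}\geq n\big(\prod_{i=1}^{n}|\sigma_i(x)|^{2}\big)^{1/n}=n\,|N_{K/\mathbb{Q}}(x)|^{1/n}\geq n$, again with equality at $x=1$ since $\psi(1)=(1,\dots,1)$, so $\mathrm{sv}(\psi(\mathcal{O}_K))=\sqrt{n}$. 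Feeding these two values and the volume into the complex scaling laws $\mathrm{Nd}_{p,min}(L)=\mathrm{d}_{p,min}(L)/\mathrm{Vol}(L)^{1/2}$ and $\mathrm{Nsv}(L)=\mathrm{sv}(L)/\mathrm{Vol}(L)^{1/2n}$ recalled above yields $\mathrm{Nd}_{p,min}(\psi(\mathcal{O}_K))=2^{n/2}/|d_K|^{1/4}$ and $\mathrm{Nsv}(\psi(\mathcal{O}_K))=\sqrt{2n}/|d_K|^{1/4n}$, as claimed. (Proposition \ref{simple}, fed with the product-distance value, recovers the lower-bound half of the last identity, which is reassuring since that proposition is itself an AM--GM statement.)

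The only step that really demands care is the volume: everything rests on correctly identifying the factor $2^{-n}$ in the change of coordinates between the classical canonical embedding into $\mathbb{R}^{2n}$ and the relative embedding $\psi$ into $\mathbb{C}^{n}$ equipped with the Euclidean norm of Definition \ref{euclidean norm}. A mismatched normalization there is exactly what would corrupt the powers of $2$ in all three formulas, so I would verify at the outset that the Euclidean and product norms on $\mathbb{C}^{n}$ are the ones under which the quoted scaling laws were derived before committing to the constants.
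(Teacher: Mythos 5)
Your proof is correct, and it is exactly the standard argument that the paper itself relies on: Lemma~\ref{complex} is stated there without proof as a classical fact about the canonical embedding (implicitly deferring to \cite{BERB}), so your derivation --- covolume $2^{-n}\sqrt{|d_K|}$ from the discriminant matrix plus the determinant-$\tfrac12$ change of coordinates per complex slot, $\mathrm{d}_{p,min}=1$ and $\mathrm{sv}=\sqrt{n}$ from $|N_{K/\Q}(x)|\in\ZZ\setminus\{0\}$ with equality at $x=1$, then the complex scaling laws --- fills in precisely the intended reasoning. The only blemish is notational: with $M_{ij}=\tau_i(w_j)$ the trace form is $M^{T}M$, not $MM^{T}$, but since $\det(MM^{T})=\det(M^{T}M)=(\det M)^2$ this has no effect on the conclusion.
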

We can now see  that both the normalized product distance and Hermite invariant of the number field lattices depend only on the discriminant of the field.
In order to find promising codes we need fields with as small discriminants as possible.

Martinet \cite{Martinet} proves the existence of an infinite tower of totally complex number fields $\{K_n\}$ of degree $2n$, where $2n=5\cdot2^k$, such that
\begin{equation} \label{G}
 \abs{d_{K_n}}^{\frac{1}{n}}=G^2,
\end{equation}
%for   $G\approx 84.375$. 
for $G \approx 92.368$.
For such fields $K_n$ we have that
$$
\mathrm{{Nd}_{p,min}}(\psi(\mathcal{O}_{K_n}))=\left(\frac{2}{G}\right)^{\frac{n}{2}} \,\, \mathrm{and}\,\,\,   \mathrm{Nsv}(\psi(\mathcal{O}_{K_n}))=\frac{\sqrt{2n}}{\sqrt{G}}.
$$
Given transmission power $P$, we require that every point $\mathbf{s}$ in a finite code $\mathcal{C}\subset \C^n$ satisfies the average power constraint
\begin{equation} \label{power_constraint}
\frac{1}{n} \sum_{i=1}^n \abs{s_i}^2=\frac{1}{n} \sum_{i=1}^n (\Re(s_i)^2+\Im(s_i)^2) \leq P.
\end{equation}
Let $R$ denote the code rate in bits per complex channel use; equivalently, $\abs{\mathcal{C}}=2^{Rn}$.
Let us now show how we can produce  codes  $\mathcal{C}$, having rate greater or equal to  $R$, and satisfying the power constraint (\ref{power_constraint}), from the number field lattices  $\psi(\OO_K)$, where $K$ belongs to the Martinet family.

In the following we will use the notation $B(\sqrt{nP})$ for    a $2n$-dimensional ball of radius $\sqrt{nP}$ in  $\C^n$.
Let us suppose that $\alpha$ is some energy normalization constant. According to  Lemma \ref{shift}, we can choose an element $x_R\in \C^n$ such that for $\mathcal{C}=B(\sqrt{nP} )\cap( x_R+\alpha \psi(\mathcal{O}_K))$ we have
$$\abs{\mathcal{C}} \geq 2^{Rn}=\frac{\Vol(B(\sqrt{nP}))}{\Vol(\alpha\psi(\mathcal{O}_K))}=\frac{2^n C_n P^n}{\alpha^{2n}\sqrt{\abs{d_K}}},$$
 where $C_n=\frac{(\pi n)^n}{n!}$.
We can now see that by using the energy normalization 
$$\alpha^2= \frac{2 P (C_n)^{\frac{1}{n}}}{2^R \abs{d_K}^{\frac{1}{2n}}}=\frac{2 P (C_n)^{\frac{1}{n}}}{2^R G}$$ 
the code $\mathcal{C}$ has rate $R$, or greater, and satisfies the average power constraint.

\subsection{Real constellations}\label{real_sec}

Let us now suppose that we have a degree $n$ totally real extension $K/\Q$ and that $\{\sigma_1,\dots,\sigma_n\}$ are the $\Q$ embeddings of $K$. We define the canonical embedding of $K$ into $\R^n$ by
$$
\psi(x)=(\sigma_1(x),\dots, \sigma_n(x)).
$$
We then have that $\psi(\OO_K)$ is an $n$-dimensional lattice in $\R^n$. 

\begin{lem}\label{real}
Let  $K/\Q$ be a totally real extension of degree $n$ and  let $\psi$ be the  canonical embedding. Then
\begin{align*}
&\mathrm{Vol}(\psi(\mathcal{O}_K))=\sqrt{|d_K|}, \\
& \mathrm{{Nd}_{p,min}}(\psi(\mathcal{O}_K))=\frac{1}{\sqrt{|d_K|}} \,\, \mathrm{and}\,\,   \mathrm{Nsv}(\psi(\mathcal{O}_K))=\frac{\sqrt{n}}{|d_K|^{\frac{1}{2n}}}.
\end{align*}
\end{lem}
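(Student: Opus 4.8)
The plan is to mirror the proof of Lemma~\ref{complex}, the point being that on $\psi(\OO_K)$ the product norm is literally the field norm $N_{K/\Q}$, so every quantity reduces to a statement about norms of algebraic integers. Concretely, I would establish the three formulas in the order volume, product distance, shortest vector, and in each case exhibit the element $1\in\OO_K$ as the extremal vector so that the inequalities become equalities.

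First, for the volume: if $W=\{w_1,\dots,w_n\}$ is a $\ZZ$-basis of $\OO_K$, then $\psi(W)$ is a basis of the full lattice $\psi(\OO_K)\subset\R^n$, so $\Vol(\psi(\OO_K))=\abs{\det(\sigma_i(w_j))_{i,j}}$. By the very definition of the field discriminant, $d_K=\det(\sigma_i(w_j))^2$, whence $\Vol(\psi(\OO_K))=\abs{d_K}^{1/2}$. (This is the real analogue of the complex case; the absence of conjugate pairs is exactly what removes the factor $2^{-n}$ appearing in Lemma~\ref{complex}.) Second, for the minimum product distance: for any nonzero $x\in\OO_K$ one has $n(\psi(x))=\prod_{i=1}^n\abs{\sigma_i(x)}=\abs{N_{K/\Q}(x)}$, and $N_{K/\Q}(x)$ is a nonzero rational integer (it is, up to sign, the constant term of the monic integer characteristic polynomial of multiplication by $x$), hence $\abs{N_{K/\Q}(x)}\geq 1$; taking $x=1$ shows this is attained, so $\mathrm{d}_{\mathrm{p,min}}(\psi(\OO_K))=1$. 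Dividing by the volume via the real scaling law $\mathrm{Nd}_{\mathrm{p,min}}(L)=\mathrm{d}_{\mathrm{p,min}}(L)/\Vol(L)$ gives $\mathrm{Nd}_{\mathrm{p,min}}(\psi(\OO_K))=\abs{d_K}^{-1/2}$.

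Third, for the shortest vector, I would apply the arithmetic–geometric mean inequality to the squared coordinates: for nonzero $x\in\OO_K$,
$$
\norm{\psi(x)}_E^2=\sum_{i=1}^n\sigma_i(x)^2\ \geq\ n\Bigl(\prod_{i=1}^n\sigma_i(x)^2\Bigr)^{1/n}=n\,\abs{N_{K/\Q}(x)}^{2/n}\ \geq\ n,
$$
with equality at $x=1$ since $\psi(1)=(1,\dots,1)$. Hence $\mathrm{sv}(\psi(\OO_K))=\sqrt{n}$, and the scaling law $\mathrm{Nsv}(L)=\mathrm{sv}(L)/\Vol(L)^{1/n}$ yields $\mathrm{Nsv}(\psi(\OO_K))=\sqrt{n}/\abs{d_K}^{1/2n}$.

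There is no genuine obstacle here; the proof is a short computation. The only points deserving a word of care are the integrality (and nonvanishing) of $N_{K/\Q}(x)$ for $x\in\OO_K\setminus\{0\}$, which is what turns the AM–GM and norm lower bounds into the clean constants $\sqrt{n}$ and $1$, and the observation that the single vector $\psi(1)$ simultaneously saturates both bounds, so that all three displayed formulas are exact equalities rather than mere estimates.
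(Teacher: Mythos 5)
Your proof is correct and is exactly the standard argument: the paper states Lemma~\ref{real} without proof (it is classical, cf.\ \cite{BERB}), relying on the same three facts you use, namely $\Vol(\psi(\OO_K))=\abs{\det(\sigma_i(w_j))}=\sqrt{\abs{d_K}}$, $\abs{N_{K/\Q}(x)}\geq 1$ for nonzero $x\in\OO_K$ with equality at $x=1$, and the AM--GM bound saturated by $\psi(1)=(1,\dots,1)$. Your added care about integrality and nonvanishing of the norm, and about the same vector $\psi(1)$ attaining both extrema, is precisely what makes the stated formulas equalities, so nothing is missing.
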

In the case of totally real fields \cite{Martinet} proves the existence of a family of fields of degree $n$, where $n=2^k$, such that
\begin{equation} \label{GR}
 \abs{d_{K_n}}^{\frac{1}{n}}=G_1,
\end{equation}
where  $G_1 \approx 1058$. If $K$ is a degree $n$ field from this family, 
\begin{equation}\label{realeq}
\mathrm{{Nd}_{p,min}}(\psi(\mathcal{O}_K))=\frac{1}{G_1^{\frac{n}{2}}} \,\, \mathrm{and}\,\,   
\mathrm{Nsv}(\psi(\mathcal{O}_K))=\frac{\sqrt{n}}{\sqrt{G_1}}.
\end{equation}
As in the case of complex constellations,  we will consider finite codes  $\mathcal{C}=B(\sqrt{nP} )\cap( x_{R}+\alpha \psi(\mathcal{O}_K))$, where
$x_{R}$ is chosen so that
$$\abs{\mathcal{C}}\geq 2^{Rn}=\frac{\Vol(B(\sqrt{nP} ))}{\Vol(\alpha\psi(\mathcal{O}_K))}=\frac{C^{\R}_n P^{n/2}}{\alpha^{n}\sqrt{\abs{d_K}}},$$
and $C^{\R}_n=\frac{(\pi n)^{n/2}}{\Gamma(n/2+1)}$.
We then have that the choice
$$\alpha^2= \frac{P(C^{\R}_n)^{\frac{2}{n}}}{2^{2R} \abs{d_K}^{\frac{1}{n}}}=\frac{P(C^{\R}_n)^{\frac{2}{n}}}{2^{2R}G_1},$$ 
yields a code of rate $R$ satisfying the power constraint $(1/n) \sum_{i=1}^n s_i^2 \leq P$.

\section{Number field codes in the Gaussian channel}
Let us now consider the question of the maximal rates we can achieve with the codes $\mathcal{C}$ of the previous section, when we demand vanishing error probability when $n$ grows to infinity.
\subsection{Complex constellations}

We consider a complex Gaussian channel model 
$$\mathbf{y}=  \mathbf{s} + \mathbf{w},$$
where $\mathbf{s} \in \mathcal{C}$, and $\forall i=1,\ldots, n$, the  $w_i$ are i.i.d. complex Gaussian random variables with variance $\sigma_h^2=\sigma^2=\frac{1}{2}$ per real dimension. (Thus, under the assumptions of the previous Section, the SNR is  $P$). For this channel model we consider the codes
$\mathcal{C}$ of Section \ref{complex_sec}.
Let us  denote with
$$d=\min_{\substack{\mathbf{s}, \bar{\mathbf{s}} \in \mathcal{C}\\ \mathbf{s} \neq \bar{\mathbf{s}}}} \norm{\mathbf{s}-\bar{\mathbf{s}}}$$
the minimum Euclidean distance in the  constellation. Then if ML decoding or naive lattice decoding (NLD)\footnote{By naive lattice decoding, we mean the closest point search in the infinite shifted lattice $x_R+\alpha \psi(\mathcal{O_K})$.} is used, we have the \emph{sphere bound}
$$P_e \leq \mathbb{P}\left\{ \norm{\mathbf{w}}^2 \geq \left(\frac{d}{2}\right)^2\right\}.$$
The minimum distance of the lattice is lower bounded by
\begin{align*}
&d^2 \geq \alpha^2 \min_{x \in \mathcal{O}_K \setminus \{0\}} \norm{ \psi(x)}^2 =\alpha^2 \mathrm{sv}(L)^2=\alpha^2 n.
\end{align*}
Thus, the error probability is bounded by
$$P_e \leq \mathbb{P}\left\{ \norm{\mathbf{w}}^2 \geq \left( \frac{\alpha^2 n}{4}\right)\right\}.$$ 
Note that $2\norm{\mathbf{w}}^2 \sim \chi^2(2n)$. For a random variable $Z \sim \chi^2(n)$, the following concentration result holds $\forall \varepsilon>0$ \cite{Laurent_Massart}:
$$\mathbb{P}\left\{\frac{Z}{n} \geq 1 + \epsilon \right\} \leq 2 e^{-\frac{n\epsilon^2}{16}}.$$  
Consequently, the probability of the set of non-typical noise vectors vanishes exponentially fast: 
$$\mathbb{P}\left\{\frac{\norm{\mathbf{w}}^2}{n} \geq 1 + \epsilon \right\} \leq 2 e^{-\frac{n\epsilon^2}{8}}.$$  
Therefore, $P_e \to 0$ when $n \to \infty$ provided that
$$2^R<\frac{P C_n^{\frac{1}{n}}}{(1+\epsilon)2G} .$$
As $C_n=\frac{(\pi n)^n}{n!}$, using Stirling's approximation we have $C_n \approx \frac{(\pi e)^n}{\sqrt{2\pi n}}$ for large $n$. We can conclude that $P_e \to 0$ for any rate
%the error probability tends to zero for any rate satisfying
$$ R < \log_2(P) - \log_2(2G(1+\varepsilon))+\log_2(\pi e).$$
Since the previous bounds hold $\forall \epsilon$, we get the following:
\begin{prop}
Over the complex Gaussian channel, any rate 
$$ R < \log_2(P) - \log_2\left(\frac{2G}{\pi e}\right)$$
is achievable with the code construction in Section \ref{complex_sec}.
\end{prop}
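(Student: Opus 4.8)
The plan is to repackage, with attention to the order of the quantifiers, the chain of estimates that precedes the statement. First I would fix a target rate $R$ strictly below $\log_2(P)-\log_2(2G/(\pi e))$ and then choose $\varepsilon>0$ small enough that the strict inequality $R<\log_2(P)-\log_2\!\big(2G(1+\varepsilon)/(\pi e)\big)$ still holds; this $\varepsilon$ is the concentration parameter used later. For every admissible degree $2n=5\cdot 2^{k}$ I take $K=K_n$ in the Martinet family satisfying (\ref{G}), and then, applying Lemma \ref{shift} to the ball $B(\sqrt{nP})$ and the lattice $\alpha\psi(\OO_{K_n})$ with $\alpha^{2}=2P(C_n)^{1/n}/(2^{R}G)$, I obtain the constellation $\mathcal{C}=B(\sqrt{nP})\cap(x_R+\alpha\psi(\OO_{K_n}))$ of Section \ref{complex_sec}; by Lemma \ref{complex} it has rate at least $R$ and satisfies the power constraint (\ref{power_constraint}).

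Next I would bound the probability of error. Under ML or naive lattice decoding the sphere bound gives $P_e\le\mathbb{P}\{\norm{\mathbf{w}}^{2}\ge(d/2)^{2}\}$, where $d$ is the minimum Euclidean distance of $\mathcal{C}$; since a nonzero codeword difference lies in $\alpha\psi(\OO_{K_n})$, Lemma \ref{complex} (equivalently, the arithmetic-geometric mean inequality applied to the conjugates of a nonzero algebraic integer, whose product of absolute values is a nonzero rational integer) gives $\mathrm{sv}(\psi(\OO_{K_n}))^{2}=n$ and hence $d^{2}\ge\alpha^{2}n$. Because the $2n$ real coordinates of $\mathbf{w}$ are i.i.d.\ $\mathcal{N}(0,1/2)$, we have $2\norm{\mathbf{w}}^{2}\sim\chi^{2}(2n)$ and $\mathbb{E}\norm{\mathbf{w}}^{2}=n$, so the Laurent--Massart inequality \cite{Laurent_Massart} yields $\mathbb{P}\{\norm{\mathbf{w}}^{2}/n\ge 1+\varepsilon\}\le 2e^{-n\varepsilon^{2}/8}$. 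Consequently $P_e\le 2e^{-n\varepsilon^{2}/8}$ whenever $\alpha^{2}n/4\ge(1+\varepsilon)n$, i.e.\ whenever $\alpha^{2}\ge 4(1+\varepsilon)$.

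Substituting the value of $\alpha^{2}$, the requirement $\alpha^{2}\ge 4(1+\varepsilon)$ reads $2^{R}\le P(C_n)^{1/n}/(2G(1+\varepsilon))$. Here I would invoke Stirling's formula: from $C_n=(\pi n)^{n}/n!$ one gets $(C_n)^{1/n}\to\pi e$, so by the choice of $\varepsilon$ the inequality on $2^{R}$ holds for all $n$ large enough, and for those $n$ the rate-$R$ code satisfies $P_e\le 2e^{-n\varepsilon^{2}/8}\to 0$. Since $R$ was an arbitrary rate below $\log_2(P)-\log_2(2G/(\pi e))$, the proposition follows.

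I do not expect a genuine obstacle: all the ingredients are already in place in the preceding paragraphs. The two points that must be handled carefully are the order of the quantifiers — first $R$, then $\varepsilon$, then the threshold $n_0$, which also has to absorb both the fact that Lemma \ref{shift} only guarantees rate \emph{at least} $R$ and the rate of convergence of $(C_n)^{1/n}$ to $\pi e$ — and the identity $\mathrm{sv}(\psi(\OO_{K_n}))^{2}=n$, i.e.\ that $\psi(1)=(1,\dots,1)$ is a shortest vector: the lower bound here is the arithmetic-geometric mean estimate behind Proposition \ref{simple}, and the upper bound is immediate from $\psi(\pm 1)$.
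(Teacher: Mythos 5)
Your proposal is correct and follows essentially the same route as the paper: sphere bound, the lower bound $d^2\ge\alpha^2 n$ from $\prod_i|\sigma_i(x)|\ge 1$ via AM--GM, the Laurent--Massart $\chi^2$ concentration, and Stirling's approximation for $C_n^{1/n}\to\pi e$. The only difference is cosmetic — you fix $R$ and $\varepsilon$ up front and let $n_0$ absorb the approximations, whereas the paper states the $\varepsilon$-dependent rate bound first and then takes the supremum over $\varepsilon$ — so there is nothing to add.
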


\subsection{Real constellations}

We consider a real Gaussian channel model
$$\mathbf{y}=  \mathbf{s} + \mathbf{w},$$
where $\mathbf{s} \in \mathcal{C}$, and $\forall i=1,\ldots, n$, the  $w_i$ are i.i.d. real Gaussian random variables with variance $\sigma_h^2=\sigma^2=1$. 
The finite codes we consider are those of section \ref{real_sec}.

Analogously to the complex case we have $d^2 \geq\alpha^2 \mathrm{sv}(L)^2=\alpha^2 n$
and
$$P_e \leq \mathbb{P}\left\{ \norm{\mathbf{w}}^2 \geq \left( \frac{\alpha^2 n}{4}\right)\right\}.$$ 

For all $\varepsilon>0$, the error probability vanishes as long as
$$2^{2R}<\frac{P (C^{\R}_n)^{\frac{2}{n}}}{4(1+\epsilon)G_1}. $$

Using Stirling's approximation $C_n^{\R}\approx \frac{(2\pi e)^{n/2}}{\sqrt{\pi n}}$, we get the following:
\begin{prop}
Over the real Gaussian channel, any rate
$$ R < \frac{1}{2}\log_2(P) -\frac{1}{2} \log_2\left(\frac{2G_1}{\pi e}\right)$$
is achievable using the code construction in Section \ref{real_sec}.
\end{prop}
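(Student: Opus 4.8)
The plan is to repeat, in the real setting, the argument just carried out for the complex Gaussian channel, this time with the code construction $\mathcal{C}=B(\sqrt{nP})\cap(x_R+\alpha\psi(\OO_K))$ of Section \ref{real_sec}. First I would invoke the sphere bound $P_e\leq\mathbb{P}\{\norm{\mathbf{w}}^2\geq(d/2)^2\}$, which is valid both for ML decoding and for naive lattice decoding, together with the minimum distance estimate $d^2\geq\alpha^2\,\mathrm{sv}(\psi(\OO_K))^2=\alpha^2 n$. The latter follows from Lemma \ref{real} (equivalently, from $\abs{N(x)}\geq 1$ for every nonzero $x\in\OO_K$ and the arithmetic--geometric mean inequality, as in Proposition \ref{simple}). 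This reduces the analysis to controlling the tail of $\norm{\mathbf{w}}^2$, which is now a $\chi^2(n)$ random variable, since the noise has variance $1$ in each of the $n$ real coordinates, rather than the $\chi^2(2n)$ appearing in the complex case.

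Next I would apply the Laurent--Massart concentration bound $\mathbb{P}\{Z/n\geq 1+\epsilon\}\leq 2e^{-n\epsilon^2/16}$ with $Z=\norm{\mathbf{w}}^2\sim\chi^2(n)$. This shows that the probability of an atypical noise vector vanishes exponentially in $n$, so that $P_e\to 0$ as soon as $\alpha^2 n/4>(1+\epsilon)n$, i.e. $\alpha^2>4(1+\epsilon)$. Substituting the energy normalization $\alpha^2=P(C^{\R}_n)^{2/n}/(2^{2R}G_1)$ fixed in Section \ref{real_sec} turns this into the rate condition $2^{2R}<P(C^{\R}_n)^{2/n}/(4(1+\epsilon)G_1)$, which is exactly the inequality recorded there.

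It then remains to pass to the limit $n\to\infty$. Using Stirling's approximation $C^{\R}_n=(\pi n)^{n/2}/\Gamma(n/2+1)\approx(2\pi e)^{n/2}/\sqrt{\pi n}$, one has $(C^{\R}_n)^{2/n}=2\pi e\cdot(\pi n)^{-1/n}(1+o(1))\to 2\pi e$, so the right-hand side of the rate condition converges to $2\pi e P/(4(1+\epsilon)G_1)$. Taking base-$2$ logarithms gives $2R<\log_2 P+\log_2(2\pi e)-\log_2(4G_1(1+\epsilon))=\log_2 P-\log_2\bigl(2G_1(1+\epsilon)/(\pi e)\bigr)$, and since $\epsilon>0$ is arbitrary we let $\epsilon\to 0$ to obtain the stated achievable region $R<\tfrac12\log_2 P-\tfrac12\log_2\bigl(2G_1/(\pi e)\bigr)$.

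Given the complex case, the argument is essentially routine, and I do not expect a serious obstacle; the only points requiring a little care are bookkeeping ones. One must check that the code cardinality remains at least $2^{Rn}$ after intersecting the shifted lattice with the ball $B(\sqrt{nP})$, which is guaranteed by Lemma \ref{shift} together with the choice of $\alpha$; that the power constraint $(1/n)\sum s_i^2\leq P$ holds for all points of $\mathcal{C}$; that the sphere bound is legitimate for naive lattice decoding and not only for ML; and that the $n$-dependent factor $(\pi n)^{-1/n}$ produced by Stirling genuinely tends to $1$, so that no rate is lost in the limit.
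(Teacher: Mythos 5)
Your proof is correct and follows essentially the same route as the paper: the sphere bound with $d^2\geq\alpha^2\,\mathrm{sv}(\psi(\OO_K))^2=\alpha^2 n$, the $\chi^2(n)$ concentration bound, the resulting rate condition $2^{2R}<P(C^{\R}_n)^{2/n}/\bigl(4(1+\epsilon)G_1\bigr)$, Stirling's approximation, and the limit $\epsilon\to 0$. The paper simply compresses these steps into the phrase ``analogously to the complex case,'' and your bookkeeping remarks (code size via Lemma \ref{shift}, validity of the sphere bound for naive lattice decoding, $(\pi n)^{-1/n}\to 1$) are exactly the points implicitly relied upon there.
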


\section{Number field codes in the fast fading channel}%

\subsection{Complex fast Rayleigh fading channel} \label{complex_fast_fading}

We consider a complex fast Rayleigh fading channel model 
$$\mathbf{y}=\mathbf{h} \cdot \mathbf{s} + \mathbf{w},$$
where $\mathbf{s} \in \mathcal{C} \subset \C^n$, and $\forall i=1,\ldots, n$, the $h_i$, $w_i$ are i.i.d. complex Gaussian random variables with variance $\sigma_h^2=\sigma^2=\frac{1}{2}$ per real dimension. Therefore, if $\mathcal{C}$ is one of the lattice codes described in Section \ref{complex_sec}, the SNR is equal to $P$. \\
The minimum distance in the received constellation is
$$d_{\mathbf{h}}=\min_{\substack{\mathbf{s}, \bar{\mathbf{s}} \in \mathcal{C}\\ \mathbf{s} \neq \bar{\mathbf{s}}}} \norm{\mathbf{h} \cdot (\mathbf{s}-\bar{\mathbf{s}})}.$$
The ML and NLD error probabilities are both bounded by
$$P_e \leq \mathbb{P}\left\{ \norm{\mathbf{w}}^2 \geq \left(\frac{d_{\mathbf{h}}}{2}\right)^2\right\}.$$ 
From the arithmetic-geometric mean inequality, we get
\begin{align*}
&d_{\mathbf{h}}^2 \geq \alpha^2 \min_{x \in \mathcal{O}_K \setminus \{0\}} \norm{ \mathbf{h} \cdot \psi(x)}^2=\\
&=\alpha^2 \min _{x \in \mathcal{O}_K \setminus \{0\}} \sum_{i=1}^n \abs{h_i}^2 \abs{\sigma_i(x)}^2 \geq \\
&\geq \alpha^2 \min _{x \in \mathcal{O}_K \setminus \{0\}}  n\left(\prod_{i=1}^n \abs{h_i}^2 \abs{\sigma_i(x)}^2\right)^{\frac{1}{n}}.
\end{align*}
Since $\prod_{i=1}^n \abs{\sigma_i(x)}\geq 1$ for all $x \in \mathcal{O}_K \setminus \{0\}$, we have
$$ d_{\mathbf{h}}^2 \geq \alpha^2 n\left(\prod_{i=1}^n \abs{h_i}^2 \right)^{\frac{1}{n}}$$
Therefore we have the upper bound
\begin{equation} \label{upper_bound}
P_e \leq \mathbb{P}\left\{ \frac{\norm{\mathbf{w}}^2}{n} \geq \frac{\alpha^2}{4} \left(\prod_{i=1}^n \abs{h_i}^2 \right)^{\frac{1}{n}}\right\}.
\end{equation}
%\begin{figure}
%  \centering \includegraphics[clip=true,width=\linewidth]{GAE_100.eps}
%  \caption{Distribution of $V_n$ for $n=100$. Note the concentration of the density near $e^{-\gamma}=0.561\ldots$.}
%  \label{fig:GAE}
% \end{figure}  
Since the $\abs{h_i}$ are Rayleigh distributed with parameter $\sigma_h^2=\frac{1}{2}$, the random variables $X_i=\abs{h_i}^2$ have exponential density $p_{X}(x)=e^{-x}$. 
To find a good upper bound for the error probability, we need to analyze the distribution of the random variable $V_n= \left(\prod_{i=1}^n X_i \right)^{\frac{1}{n}}$, which is a geometric average of exponential distributions. 

Note that $\ln V_n=\frac{1}{n} \sum_{i=1}^n \ln X_i$. The random variables $Y_i=\ln X_i$ have density $p_Y(y)=e^{y-e^{y}}$ and mean
$$ m_y=\mathbb{E}[\ln X]=\int_{0}^{\infty} (\ln x) e^{-x} dx=-\gamma,$$
where $\gamma \approx 0.577215$ is the Euler-Mascheroni constant. From the Chernoff bound \cite[\S 2.1.6]{Proakis} for the zero-mean random variable $-\frac{1}{n}\sum_{i=1}^n \ln X_i -\gamma$, we get that $\forall \delta,v >0, \forall v>0$,
\begin{equation} \label{Chernoff_bound}
\mathbb{P}\left\{\frac{1}{n}\sum_{i=1}^n \ln X_i \leq -(\delta+\gamma)\right\}\leq e^{-nv(\delta+\gamma)} \left(\mathbb{E}[e^{-vX}]\right)^n
\end{equation}
For a given $\delta>0$, the optimal $v_{\delta}>0$ that gives the tightest upper bound is the solution of the equation $\mathbb{E}[-\ln X e^{-v_{\delta} \ln X}]=(\delta + \gamma) \mathbb{E}[e^{-v_{\delta} \ln X}]$. We have
\begin{align*}
&\mathbb{E}[e^{-v\ln X}]=\int_{0}^{\infty} \frac{e^{-x}}{x^v} dx=\Gamma(1-v),\\
&\mathbb{E}[-\ln X e^{-v\ln X}]=\int_{0}^{\infty} \frac{\ln x e^{-x}}{x^v} dx=-\Gamma(1-v)\psi(1-v), 
\end{align*} 
where $\psi(x)=\frac{d}{dx}\ln\Gamma(x)$ denotes the digamma function. Thus, $\psi(1-v_{\delta})=-(\delta+\gamma)$. Note that as $\delta \to 0$, also $v_{\delta} \to 0$ since $\psi(1)=-\gamma$. 
%Then for example $v=\frac{1}{2}$ will be optimal for $\delta=2\ln2$, and we get
The Chernoff bound (\ref{Chernoff_bound}) thus gives
\begin{align*}
&\mathbb{P}\left\{\ln V_n \leq -(\delta+\gamma)\right\}=\mathbb{P}\{ V_n \leq e^{-\delta} e^{-\gamma}\} \leq \\
& \leq e^{-nv_{\delta}(\gamma + \delta)}(\Gamma(1-v_{\delta}))^n= e^{n(v_{\delta} \psi(1-v_{\delta})+\ln\Gamma(1-v_{\delta}))} 
\end{align*}
The mean value theorem for the function $\ln\Gamma(x)$ in the interval $[1-v_{\delta},1]$ yields $\abs{\ln\Gamma(1-v_{\delta})} \leq \abs{\psi(\xi)} v_{\delta}$
for some $\xi \in (1-v_{\delta},1)$. Since $\psi <0$ in the interval $(0,1)$, $\abs{\psi(\xi)} \leq \abs{\psi(1-v_{\delta})}=-\psi(1-v_{\delta})$, and so 
$$v_{\delta} \psi(1-v_{\delta})+\ln \Gamma (1-v_{\delta}) \leq 0.$$ Therefore $\forall \delta>0$, $\mathbb{P}\left\{\ln V_n \leq -(\delta+\gamma)\right\} \to 0$ as $n \to \infty$.  
\smallskip \par
Fix $\epsilon>0$. Going back to the bound (\ref{upper_bound}), the law of total probability implies that 
{ \allowdisplaybreaks
\begin{align*}
&P_e 
%\leq \mathbb{P}\left\{ \frac{\norm{\mathbf{w}}^2}{n}  \geq \frac{\alpha^2}{4} V_n\right\} \leq \\
%&\leq \mathbb{P}\left\{ \frac{\norm{\mathbf{w}}^2}{n} \geq 1+\epsilon \right\} + \\
%& + \mathbb{P}\left\{ \frac{\norm{\mathbf{w}}^2}{2n} < 1+\epsilon \right\}\mathbb{P}\left\{ \frac{\norm{\mathbf{w}}^2}{2n}  \geq \frac{\alpha^2}{4} V_n \;\Bigg|\; \frac{\norm{\mathbf{w}}^2}{2n} < 1+\epsilon \right\} \leq \\
%&
\leq \mathbb{P}\left\{ \frac{\norm{\mathbf{w}}^2}{n} \geq 1+\epsilon \right\} + \mathbb{P}\left\{ \frac{\alpha^2}{4} V_n < 1+ \epsilon\right\}.
\end{align*}
}%
As seen in the Gaussian case, the first term in the previous sum vanishes exponentially fast. 
% we have
%$$\mathbb{P}\left\{\frac{\norm{\mathbf{w}}^2}{n} \geq 1 + \epsilon \right\} \leq 2 e^{-\frac{n\epsilon^2}{4}}.$$  
The second term will tend to $0$ when $n \to \infty$ provided that 
$\frac{4(1+\epsilon)}{\alpha^2} < e^{-(\delta+\gamma)}$
Therefore, $P_e \to 0$ provided that
$$2^R<\frac{P C_n^{\frac{1}{n}}}{2e^{\delta+\gamma}(1+\epsilon)d_K^{\frac{1}{2n}}} =\frac{P C_n^{\frac{1}{n}}}{2e^{\delta+\gamma}(1+\varepsilon)G}.$$
Again using Stirling's approximation we have $C_n \approx \frac{(\pi e)^n}{\sqrt{2\pi n}}$ for large $n$, and the achievable rate is 
$$ R < \log_2(P) - \log_2\left(\frac{2G(1+\varepsilon)e^{\delta+\gamma}}{\pi e}\right)$$
Since the previous bounds hold for any choice of $\epsilon, \delta>0$, we can state the following:
\begin{prop} Over the complex Rayleigh fading channel, any rate 
$$ R < \log_2(Pe^{-\gamma}) - \log_2\left(\frac{2G}{\pi e}\right)$$
is achievable using the codes of Section \ref{complex_sec}. 
\end{prop}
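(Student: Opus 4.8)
The plan is to combine the finite code construction of Section~\ref{complex_sec} with a union-type bound on the decoding error that isolates the contributions of atypical noise and atypical fading. First I would fix the energy normalization $\alpha^2 = \frac{2P(C_n)^{1/n}}{2^R G}$, so that the code $\mathcal{C} = B(\sqrt{nP}) \cap (x_R + \alpha\psi(\mathcal{O}_K))$ has rate at least $R$ and meets the average power constraint~(\ref{power_constraint}). Using the sphere bound $P_e \le \mathbb{P}\{\norm{\mathbf{w}}^2 \ge (d_{\mathbf{h}}/2)^2\}$, valid for both ML and naive lattice decoding, together with the arithmetic--geometric mean inequality and the defining property $\prod_{i=1}^n \abs{\sigma_i(x)} \ge 1$ of nonzero algebraic integers, I would reduce the received minimum distance to $d_{\mathbf{h}}^2 \ge \alpha^2 n\, V_n$ with $V_n = \big(\prod_{i=1}^n \abs{h_i}^2\big)^{1/n}$, which yields the upper bound~(\ref{upper_bound}).

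Next I would apply the law of total probability at the threshold $1+\epsilon$, giving $P_e \le \mathbb{P}\{\norm{\mathbf{w}}^2/n \ge 1+\epsilon\} + \mathbb{P}\{(\alpha^2/4)V_n < 1+\epsilon\}$. The first term is the \emph{atypical noise} event; since $2\norm{\mathbf{w}}^2 \sim \chi^2(2n)$, the Laurent--Massart concentration inequality makes it decay exponentially in $n$, exactly as in the Gaussian case. The second term is the \emph{atypical fading} event, and controlling it is the heart of the argument.

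For the fading term I would write $\ln V_n = \frac1n\sum_{i=1}^n \ln X_i$ with $X_i = \abs{h_i}^2$ exponentially distributed, so that $\mathbb{E}[\ln X_i] = -\gamma$; by the law of large numbers the geometric mean concentrates at $e^{-\gamma}$, which is precisely the source of the $e^{-\gamma}$ loss in the final rate. To obtain a quantitative one-sided estimate I would use the Chernoff bound for $-\frac1n\sum_{i=1}^n \ln X_i - \gamma$: since $\mathbb{E}[e^{-v\ln X}] = \Gamma(1-v)$, the optimal exponential tilt $v_\delta$ solves $\psi(1-v_\delta) = -(\delta+\gamma)$, and one gets $\mathbb{P}\{V_n \le e^{-(\delta+\gamma)}\} \le e^{n(v_\delta\psi(1-v_\delta) + \ln\Gamma(1-v_\delta))}$. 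The delicate point, and the step I expect to be the main obstacle, is showing that this exponent is nonpositive, i.e. $v_\delta\psi(1-v_\delta) + \ln\Gamma(1-v_\delta) \le 0$; I would deduce it from the mean value theorem applied to $\ln\Gamma$ on $[1-v_\delta,1]$ combined with the negativity and monotonicity of $\psi$ on $(0,1)$, which forces $\abs{\ln\Gamma(1-v_\delta)} \le \abs{\psi(1-v_\delta)}\, v_\delta$. Hence $\mathbb{P}\{V_n \le e^{-(\delta+\gamma)}\} \to 0$ for every fixed $\delta>0$.

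Finally I would observe that the fading term vanishes as soon as $\frac{4(1+\epsilon)}{\alpha^2} < e^{-(\delta+\gamma)}$; substituting $\alpha^2$ and the Stirling estimate $C_n \approx (\pi e)^n/\sqrt{2\pi n}$ turns this into $R < \log_2 P - \log_2\!\big(\tfrac{2G(1+\epsilon)e^{\delta+\gamma}}{\pi e}\big)$. Since $\epsilon>0$ and $\delta>0$ are arbitrary, letting both tend to $0$ shows that every rate strictly below $\log_2(Pe^{-\gamma}) - \log_2(2G/(\pi e))$ is achievable, which is the assertion of the proposition.
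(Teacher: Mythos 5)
Your proposal reproduces the paper's argument essentially step for step: the same sphere bound and AM--GM reduction to the geometric mean $V_n$ of the fading coefficients, the same split into atypical noise (handled via the $\chi^2$ concentration bound) and atypical fading (handled via the Chernoff bound with the tilt determined by $\psi(1-v_\delta)=-(\delta+\gamma)$ and the mean value theorem for $\ln\Gamma$), and the same final substitution of $\alpha^2$ with Stirling's approximation before letting $\epsilon,\delta \to 0$. It is correct and matches the paper's proof.
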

We can compare this result to the bound for Rayleigh channel capacity given in \cite{ONBP_2002}, equation (7):
$$C \geq \log_2(1+P e^{-\gamma}).$$
This is a lower bound, however it has been shown to be very tight for high SNR.

\subsection{Real Rayleigh fast fading channel} \label{real_fast_fading}

We consider a real fast Rayleigh fading channel model \cite{BERB} 
$$\mathbf{y}=\mathbf{g} \cdot \mathbf{s} + \mathbf{w},$$
where $\mathbf{s} \in \mathcal{C}$, and $\forall i=1,\ldots, n$, the $g_i=\abs{h_i}$ are Rayleigh distributed with parameter $\sigma_h^2=\frac{1}{2}$, and $w_i$ are i.i.d. real Gaussian random variables with variance $\sigma^2=1$. Note that the SNR is again $P$ when using one of the real lattice constellations from Section \ref{real_sec}. 
The error probability estimate for this model proceeds exactly as  in the case of the complex Rayleigh fading channel in Section \ref{complex_fast_fading}. 
%The deviation from the previous case happens only after we have obtained the equation  $\frac{4(1+\epsilon)}{\alpha^2} < e^{-(\delta+\gamma)}$.
 A sufficient condition to have vanishing error probability when $n \to \infty$ is
$$2^{2R}<\frac{P (C_n^{\R})^{\frac{1}{n}}}{4e^{\delta+\gamma}(1+\epsilon)d_K^{\frac{1}{2n}}} \approx \frac{P (C_n^{\R})^{\frac{1}{n}}}{4e^{\delta+\gamma}(1+\epsilon)G_1}.$$
Since $C_n^{\R} \approx \frac{(2\pi e)^n}{\sqrt{\pi n}}$ for large $n$, and taking the supremum over all $\epsilon>0$, we find the following:
\begin{prop} Over the real Rayleigh fading channel, any rate
$$ R < \frac{1}{2}\log_2(Pe^{-\gamma}) - \frac{1}{2}\log_2\left(\frac{2G_1}{\pi e}\right)$$
is achievable using the codes of Section \ref{real_sec}.
\end{prop}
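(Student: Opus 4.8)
The plan is to carry over the analysis of Section~\ref{complex_fast_fading} with the obvious substitutions for the real setting. First I would invoke the sphere bound $P_e\leq\mathbb{P}\{\norm{\mathbf{w}}^2\geq(d_{\mathbf{g}}/2)^2\}$, which holds for both ML and naive lattice decoding, where $d_{\mathbf{g}}=\min_{\mathbf{s}\neq\bar{\mathbf{s}}}\norm{\mathbf{g}\cdot(\mathbf{s}-\bar{\mathbf{s}})}$. Because the constellation sits inside the coset $x_R+\alpha\psi(\OO_K)$, the difference set lies in $\alpha\psi(\OO_K)$ and hence $d_{\mathbf{g}}^2\geq\alpha^2\min_{x\in\OO_K\setminus\{0\}}\sum_{i=1}^n g_i^2\sigma_i(x)^2$; applying the arithmetic--geometric mean inequality and then $\prod_{i=1}^n\abs{\sigma_i(x)}=\abs{N_{K/\Q}(x)}\geq 1$ for every nonzero algebraic integer (equivalently $\mathrm{d}_{\mathrm{p,min}}(\psi(\OO_K))=1$ by Lemma~\ref{real}) gives $d_{\mathbf{g}}^2\geq\alpha^2 n\, V_n$ with $V_n=(\prod_{i=1}^n X_i)^{1/n}$ and $X_i=g_i^2$ exponential of density $e^{-x}$, exactly as in the complex case. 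Thus $P_e\leq\mathbb{P}\{\norm{\mathbf{w}}^2/n\geq(\alpha^2/4)V_n\}$.

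Next I would split this probability against the typical-noise event by the law of total probability:
$$P_e\leq\mathbb{P}\left\{\frac{\norm{\mathbf{w}}^2}{n}\geq 1+\epsilon\right\}+\mathbb{P}\left\{\frac{\alpha^2}{4}V_n<1+\epsilon\right\}.$$
The only structural change from Section~\ref{complex_fast_fading} is that the real unit-variance noise satisfies $\norm{\mathbf{w}}^2\sim\chi^2(n)$ rather than $\chi^2(2n)$, so the Laurent--Massart concentration bound still makes the first term vanish exponentially. For the second term I would reuse verbatim the large-deviations computation already carried out: $\ln V_n=\frac1n\sum_i\ln X_i$ has mean $-\gamma$, the Chernoff exponent is governed by $\mathbb{E}[X^{-v}]=\Gamma(1-v)$ with optimal $v_\delta$ solving $\psi(1-v_\delta)=-(\delta+\gamma)$, and the mean value theorem for $\ln\Gamma$ on $[1-v_\delta,1]$ yields $\mathbb{P}\{\ln V_n\leq-(\delta+\gamma)\}\to 0$ for every $\delta>0$. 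Hence the second term tends to $0$ as soon as $\frac{\alpha^2}{4}e^{-(\delta+\gamma)}>1+\epsilon$, i.e.\ $\alpha^2>4(1+\epsilon)e^{\delta+\gamma}$.

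Finally I would insert the energy normalization $\alpha^2=\frac{P(C_n^{\R})^{2/n}}{2^{2R}G_1}$ from Section~\ref{real_sec} into this inequality, obtaining the sufficient condition $2^{2R}<\frac{P(C_n^{\R})^{2/n}}{4(1+\epsilon)e^{\delta+\gamma}G_1}$, and then pass to the limit $n\to\infty$ using Stirling's approximation $C_n^{\R}\approx\frac{(2\pi e)^{n/2}}{\sqrt{\pi n}}$, so that $(C_n^{\R})^{2/n}\to 2\pi e$. Taking the supremum over all $\epsilon,\delta>0$ gives $2R<\log_2(Pe^{-\gamma})-\log_2\!\left(\frac{2G_1}{\pi e}\right)$, which is the stated bound.

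I do not expect a new obstacle: the one substantive analytic fact --- that the geometric mean $V_n$ of the i.i.d.\ exponential fading powers concentrates from below at $e^{\mathbb{E}[\ln X]}=e^{-\gamma}$ with an exponentially small lower tail --- is exactly what was established in Section~\ref{complex_fast_fading}, and since the per-coordinate fading distribution is identical it transfers without change. The only thing to be careful about is the dimension bookkeeping ($n$ real coordinates, $\chi^2(n)$ noise, the sphere volume constant $C_n^{\R}$, and the degree-$n$ root-discriminant bound $\abs{d_{K_n}}^{1/n}=G_1$), which is what produces the extra factor $\tfrac12$ in the exponent of the achievable rate compared with the complex case.
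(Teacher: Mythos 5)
Your proposal is correct and follows essentially the same route as the paper, which itself establishes this proposition by carrying over the complex fast-fading analysis verbatim (sphere bound, AM--GM with $\abs{N_{K/\Q}(x)}\geq 1$, Chernoff bound on the geometric mean of exponentials, typical-noise splitting, then the energy normalization and Stirling). Your dimension bookkeeping is in fact slightly more careful than the paper's statement of the sufficient condition, which contains small exponent typos ($(C_n^{\R})^{1/n}$, $d_K^{1/(2n)}$, and $(2\pi e)^n$ where $(C_n^{\R})^{2/n}$, $d_K^{1/n}$, and $(2\pi e)^{n/2}$ are meant); your version is the consistent one.
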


\section{Discussion}
Let us now draw some conclusions and highlight the similarities between Gaussian and fast-fading channels.
We saw that there exists an ensemble of lattice codes from number fields that reach all rates satisfying
$$
R < \frac{1}{2}\log_2(Pe^{-\gamma}) - \frac{1}{2}\log_2\left(\frac{2G_1}{\pi e}\right)
$$
in real fast fading channels and rates
$$
R < \frac{1}{2}\log_2(P) - \frac{1}{2}\log_2\left(\frac{2G_1}{\pi e}\right),
$$
in Gaussian channel.
According to \eqref{realeq} these results can be transformed into the following forms
\begin{equation}\label{product}
R < \frac{1}{2}\log_2(Pe^{-\gamma}) - \frac{1}{2}\log_2\left(\frac{2}{\pi e (\mathrm{Nd}_{(\mathrm{p, min})}(L))^{2/n} }\right)
\end{equation}
$$
R < \frac{1}{2}\log_2(P) - \frac{1}{2}\log_2\left(\frac{2n}{\mathrm{Nsv}(L)^{2}\pi e}\right).
$$
Here the normalized product distance and shortest vector  play identical roles. The greater the distance, the smaller the gap to capacity.
This is not only a property of these specific number field codes, but is true for any family of lattice codes. Indeed, while our proofs refer to specific number field codes, the performance only depends on the normalized product distances.

\smallskip
We can now see that in order to reach a constant gap to capacity in fast fading channel, at least with this method, we must
have that $(\mathrm{Nd}_{(\mathrm{p, min})}(L_n))^{2/n}$ stays above some constant. According to Proposition \ref{simple} the product distance is upperbounded by the Hermite constant of the lattice. This result suggests that when $n$ grows a lattice code must have a Hermite constant growing linearly with $n$ in order to be good over the fast fading channel. However, we note that a good Hermite constant does not automatically guarantee a good performance in fast fading channels for general families of lattice codes.

\smallskip
Finally, let us consider how close to capacity this approach can bring us in an optimal scenario. If we consider totally real lattices from number fields, then  
the Odlyzko bound states that when $m\rightarrow\infty$ we have that  $|d_K|^{1/m}\geq 60.8$. Assuming that we can reach this bound with an ensemble of lattice codes we have that any rate $R$ satisfying
$$
R < \frac{1}{2}\log_2(Pe^{-\gamma}) - \frac{1}{2}\log_2\left(\frac{2 \cdot 60.8}{\pi e}\right)
$$
is achievable.
The Odlyzko bound does bound the achievable rate of number field codes, but if we consider all lattices  we have a slightly weaker bound.
For a full lattice in $\R^n$, a classical result of Minkowski gives us that $\mathrm{Nd}_{\mathrm{p, min}}(L)\leq \frac{n!}{n^n}$. Assuming that we have an ensemble of lattice codes reaching this bound 
we have by Stirling's approximation and  equation \eqref{product} that rates satisfying
$$
R < \frac{1}{2}\log_2(Pe^{-\gamma}) - \frac{1}{2}\log_2\left(\frac{2e}{\pi}\right),
$$
are achievable. This result shows that with this method we will always have a gap to $\frac{1}{2}\log_2(Pe^{-\gamma})$ irrespective of the choice of lattice code. However, just like in the case of the Gaussian channel, these bounds do not represent the performance limits of lattice codes, because the method itself and the error probability bounds are suboptimal.

\begin{rem}
We note that the number field towers we used are not the best known possible.  It was shown in \cite {Hajir_Maire} that one can construct a family of real fields such that $G_1<954.3$  and totally complex such that $G<82.2$, but this choice would add some notational complications.
\end{rem}

\bibliographystyle{IEEEtran}
{\small

}
\newpage
\section{Appendix: Increasing the product distance using ideals}
In the previous sections we were considering just the ring of algebraic integers $\OO_K$ and the corresponding lattice $\psi(\OO_K)$. 
Just as well we could have considered any other  additively closed subgroup of $\OO_K$ and in particular  ideals of $\OO_K$.  
 In most works on number field lattices the authors  were concentrating on either the ring $\OO_K$ or a principal ideal  $a\OO_K$. In   \cite{FOV} the authors were also considering the question of increasing the normalized product distance and achievable rate by using a non-principal ideals $I$.

While 
%the question of 
finding the normalized product distance of  lattices $\psi(\OO_K)$ or  $\psi(a\OO_K)$ is an easy task, the same is not true for a non principal ideal $I$. In this appendix we will show how this problem can be reduced to another more well known problem in algebraic number theory and how it can be used to study the  performance limits of lattices $\psi(I)$.

\subsection{Ideals in totally complex fields}

Let us suppose that $K$ is  degree $2n$ totally complex field. We will use the notation $\mathrm{N}(I)=[\OO_K:I]$, for the norm of an ideal $I$ and $nr_{K/\Q}(x)$ for the norm of an element $x$ in $K$. From classical algebraic number theory we  have that $N(a\OO_K)=|nr_{K/\Q}(a)|$ and 
$N(AB)=N(A)N(B)$.

\begin{lem}\label{idealvolume}
Let us suppose that $K$ is a totally complex field of degree $2n$ and that  $I$ is an integral ideal in $K$. We then have that $\psi(I)$ is
a $2n$-dimensional lattice in $\C^n$ and that
$$
\Vol{(\psi(I))}=[\OO_K:I] 2^{-n}\sqrt{|d_K|}.
$$
\end{lem}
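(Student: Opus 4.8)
The plan is to reduce this to Lemma~\ref{complex} by exploiting the fact that a nonzero integral ideal $I$ is a finite-index subgroup of $(\OO_K,+)$. First I would recall the standard fact from algebraic number theory that $I$, being a nonzero ideal of $\OO_K$, is a free $\ZZ$-module of rank $2n$ with $[\OO_K:I]=\mathrm{N}(I)<\infty$; in particular $\psi(I)$ spans $\C^n\cong\R^{2n}$ over $\R$, so $\psi(I)$ is genuinely a $2n$-dimensional lattice.

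Next, since $\psi$ is an injective additive homomorphism, it maps $I$ isomorphically onto a subgroup $\psi(I)\subseteq\psi(\OO_K)$ of the same index $[\OO_K:I]$. Hence the covolumes are related by $\Vol(\psi(I))=[\OO_K:I]\,\Vol(\psi(\OO_K))$. Concretely, if $W=\{w_1,\dots,w_{2n}\}$ is a $\ZZ$-basis of $\OO_K$ and $V=\{v_1,\dots,v_{2n}\}$ a $\ZZ$-basis of $I$, then $V=MW$ for an integer matrix $M$ with $\abs{\det M}=[\OO_K:I]$, so the two Gram matrices differ by a factor $(\det M)^2$ in determinant and the covolumes by $\abs{\det M}$.

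Finally I would invoke Lemma~\ref{complex}, which gives $\Vol(\psi(\OO_K))=2^{-n}\sqrt{\abs{d_K}}$, and conclude $\Vol(\psi(I))=[\OO_K:I]\,2^{-n}\sqrt{\abs{d_K}}$, as claimed. There is no real obstacle here: the computation is a direct analogue of the proof of Lemma~\ref{complex}, and the only point requiring a word of care is the identification of the group index $[\OO_K:I]$ with the ideal norm $\mathrm{N}(I)$ — which is precisely the definition adopted in the paragraph preceding the lemma — together with the observation that an injective group homomorphism transports subgroup indices (and hence the covolume scaling factor) faithfully to the lattice setting.
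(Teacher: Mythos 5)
Your argument is correct: the paper itself states this lemma as a well-known fact and offers no proof, and your derivation (a nonzero integral ideal is a rank-$2n$ $\ZZ$-submodule of finite index $[\OO_K:I]=\mathrm{N}(I)$, the injective embedding $\psi$ preserves that index, and covolume scales by the index applied to $\Vol(\psi(\OO_K))=2^{-n}\sqrt{\abs{d_K}}$ from Lemma~\ref{complex}) is exactly the standard route one would write down. No gaps; the change-of-basis/Gram-matrix justification of the scaling factor is the right level of care.
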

This well known result gives the volume of an ideal, but the question of the size of the normalized product distance of an ideal is a more complicated issue. In \cite[Theorem 3.1]{FOV} the authors stated the analogue of the following  result for the totally real case. It is simply a restatement of the definitions.
\begin{prop} 
Let us suppose that $K$ is a totally complex field of degree $2n$ and that $I$ is an integral ideal of $K$. We then have that
\begin{equation}
\mathrm{Nd_{p,min}}(\psi(I))=\frac{2^{\frac{n}{2}}}{|d_K|^{\frac{1}{4}}}\mathrm{min}(I),
\end{equation}
where $\mathrm{min}(I):=\underset{x \in I \setminus \{0\}}{\mathrm{min}}\sqrt{\frac{|\mathrm{nr}_{K/\Q}(x)|}{\mathrm{N}(I)}}.$
\end{prop}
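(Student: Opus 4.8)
The plan is to do exactly what the authors promise: unwind the two definitions involved, the only non-formal ingredient being how the field norm $\mathrm{nr}_{K/\Q}$ is read off from the relative canonical embedding $\psi$, together with the volume formula already established in Lemma~\ref{idealvolume}. So I would not invoke anything beyond Definition~\ref{tulo}, the scaling law for full complex lattices stated in the preliminaries, and the standard facts $\mathrm{N}(a\OO_K)=\abs{\mathrm{nr}_{K/\Q}(a)}$ and multiplicativity of ideal norms recalled just above.

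First I would record the arithmetic identity that makes the product norm computable on $\psi(I)$. Since $K$ is totally complex of degree $2n$, its $2n$ complex embeddings come in $n$ conjugate pairs and $\{\sigma_1,\dots,\sigma_n\}$ is a choice of one representative from each pair; hence for every $x\in K$,
$$
\abs{\mathrm{nr}_{K/\Q}(x)}=\prod_{i=1}^{n}\abs{\sigma_i(x)}\,\abs{\overline{\sigma_i(x)}}=\prod_{i=1}^{n}\abs{\sigma_i(x)}^2 .
$$
Consequently the product norm of the lattice vector $\psi(x)=(\sigma_1(x),\dots,\sigma_n(x))$ is precisely $n(\psi(x))=\prod_{i=1}^n\abs{\sigma_i(x)}=\sqrt{\abs{\mathrm{nr}_{K/\Q}(x)}}$, and since $\psi$ is injective on $K$ the nonzero vectors of $\psi(I)$ correspond bijectively to the nonzero elements of $I$. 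I would then check that the infimum in Definition~\ref{tulo} is attained: for $x\in I\setminus\{0\}$ we have $x\OO_K\subseteq I\subseteq\OO_K$, so $\mathrm{N}(I)=[\OO_K:I]$ divides $[\OO_K:x\OO_K]=\abs{\mathrm{nr}_{K/\Q}(x)}$, whence $\abs{\mathrm{nr}_{K/\Q}(x)}$ takes values in a nonempty set of positive integers and attains a minimum. Therefore $\mathrm{d}_{\mathrm{p,min}}(\psi(I))=\min_{x\in I\setminus\{0\}}\sqrt{\abs{\mathrm{nr}_{K/\Q}(x)}}$.

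Finally I would combine this with the scaling law $\mathrm{Nd}_{\mathrm{p,min}}(L)=\mathrm{d}_{\mathrm{p,min}}(L)/\Vol(L)^{1/2}$ for a full lattice $L\subset\C^n$ and with Lemma~\ref{idealvolume}, which gives $\Vol(\psi(I))^{1/2}=\mathrm{N}(I)^{1/2}\,2^{-n/2}\,\abs{d_K}^{1/4}$. Dividing, the factor $\mathrm{N}(I)^{1/2}$ in the denominator merges with the numerator to produce $\min_{x\in I\setminus\{0\}}\sqrt{\abs{\mathrm{nr}_{K/\Q}(x)}/\mathrm{N}(I)}=\mathrm{min}(I)$, while the remaining constants assemble into $2^{n/2}/\abs{d_K}^{1/4}$, which is the asserted formula. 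I do not expect any real obstacle here: the single step that needs a line of genuine argument is the conjugate-pair identity expressing $\prod_i\abs{\sigma_i(x)}$ through $\abs{\mathrm{nr}_{K/\Q}(x)}$; the attainment of the minimum and the assembly of constants are pure bookkeeping.
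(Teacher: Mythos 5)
Your proposal is correct and follows essentially the same route as the paper's own (very terse) proof: combine Lemma~\ref{idealvolume}, the scaling law for complex lattices, and the key identity $n(\psi(x))=\sqrt{\abs{\mathrm{nr}_{K/\Q}(x)}}$ from the conjugate-pair structure of the embeddings. Your added remark that the minimum is attained because $\mathrm{N}(I)$ divides $\abs{\mathrm{nr}_{K/\Q}(x)}$ is a harmless extra detail the paper leaves implicit.
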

\begin{proof}
This  result follows from Lemma \ref{idealvolume}, the definition of the normalized product distance and from noticing that
 $\sqrt{|\mathrm{nr}_{K/\Q}(x)|}=|n(\psi(x))|.$
\end{proof}

Due to the basic ideal theory of algebraic numbers $\mathrm{min}(I)$ is always larger or equal to $1$. If $I$ is not a principal ideal then we have that
$\mathrm{min}(I)\geq \sqrt{2}$. Comparing this to Proposition \ref{complex} we find that, given a non principal ideal domain $\OO_K$, we  should use an ideal $I$, which is not principal, to maximize the product distance. Now there are two obvious questions. Given a  non principal ideal domain $\OO_K$, which ideal $I$ we should use and how much we gain if the used ideal is optimal? 
%As stated in \cite{FOV} neither of the questions is obvious. 
Before answering these questions we need the following.

\begin{lem}\cite{FOV}
Let us suppose that $x$ is any element from $K$. We then have that
$$
\mathrm{Nd_{p,min}}(\psi(xI))=\mathrm{Nd_{p,min}}(\psi(I)).
$$
\end{lem}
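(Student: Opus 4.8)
The plan is to reduce the statement to the scaling laws for the normalized product distance recorded in Section~II, exploiting only that the relative canonical embedding $\psi$ is multiplicative coordinatewise. We may assume $x\neq 0$, since otherwise $xI=\{0\}$ is not a lattice and there is nothing to prove. Note that $xI$ need not be integral, but this causes no difficulty: $xI$ is still a nonzero fractional ideal, hence a full $\Z$-lattice of rank $2n$, and the scaling laws of Section~II hold for an arbitrary full lattice in $\C^n$. The key observation is that for every $z\in K$ we have $\psi(xz)=(\sigma_1(x)\sigma_1(z),\dots,\sigma_n(x)\sigma_n(z))=\psi(x)\cdot\psi(z)$, where $\cdot$ is the coordinatewise (Hadamard) product; consequently $\psi(xI)=\psi(x)\cdot\psi(I)$ is the image of $\psi(I)$ under the $\R$-linear map $M_{\psi(x)}$ of $\C^n\cong\R^{2n}$ given by multiplication by the fixed vector $\psi(x)$. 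Since each $\sigma_i$ is a field embedding and $x\neq 0$, every coordinate $\sigma_i(x)$ is nonzero, so $M_{\psi(x)}$ is invertible, and in the real coordinates it is block diagonal with $2\times 2$ blocks representing multiplication by $\sigma_i(x)\in\C$.

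Next I would track how $M_{\psi(x)}$ scales the two quantities entering $\mathrm{Nd_{p,min}}$. Every nonzero vector of $\psi(xI)$ is of the form $\psi(x)\cdot\psi(z)$ with $z\in I\setminus\{0\}$, and its product norm equals $\prod_{i=1}^n\abs{\sigma_i(x)}\,\abs{\sigma_i(z)}=\abs{n(\psi(x))}\,n(\psi(z))$; as the factor $\abs{n(\psi(x))}$ does not depend on $z$, taking the infimum over nonzero lattice vectors gives $\mathrm{d_{p,min}}(\psi(xI))=\abs{n(\psi(x))}\,\mathrm{d_{p,min}}(\psi(I))$. For the covolume, $\det M_{\psi(x)}=\prod_{i=1}^n\abs{\sigma_i(x)}^2=\abs{n(\psi(x))}^2$, so $\Vol(\psi(xI))=\abs{n(\psi(x))}^2\,\Vol(\psi(I))$. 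Substituting into the scaling law $\mathrm{Nd_{p,min}}(L)=\mathrm{d_{p,min}}(L)/\Vol(L)^{1/2}$ for full lattices in $\C^n$, the factor $\abs{n(\psi(x))}$ cancels between numerator and denominator, yielding $\mathrm{Nd_{p,min}}(\psi(xI))=\mathrm{Nd_{p,min}}(\psi(I))$. The totally real case is identical, with $\abs{\sigma_i(x)}$ in place of $\abs{\sigma_i(x)}^2$ in the determinant and the law $\mathrm{Nd_{p,min}}(L)=\mathrm{d_{p,min}}(L)/\Vol(L)$, so the same cancellation occurs.

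There is no genuine obstacle here: the assertion is essentially the bookkeeping fact that the normalized product distance is invariant under coordinatewise rescaling by a nonzero diagonal vector. The only point deserving care is the covolume computation in the complex case, where multiplication by $\sigma_i(x)$ scales planar area by $\abs{\sigma_i(x)}^2$ rather than $\abs{\sigma_i(x)}$; it is precisely this quadratic behaviour, matched against the square root in the scaling law for $\mathrm{Nd_{p,min}}$, that makes the normalization absorb the scaling exactly.
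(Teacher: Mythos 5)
Your proof is correct: the paper itself states this lemma without proof, citing \cite{FOV}, and your argument is exactly the standard one behind that reference -- the minimum product distance and the covolume both pick up the factor $\abs{\mathrm{nr}_{K/\Q}(x)}$ (appearing as $\abs{n(\psi(x))}$ and, in the complex case, as $\abs{n(\psi(x))}^2$ in the determinant), which the normalization $\mathrm{d_{p,min}}(L)/\Vol(L)^{1/2}$ (resp.\ $\mathrm{d_{p,min}}(L)/\Vol(L)$ in the real case) cancels exactly. Your care over the two real dimensions per complex embedding, the nonvanishing of each $\sigma_i(x)$ for $x\neq 0$, and the fact that $xI$ is merely a fractional ideal but still a full lattice covers precisely the points where a sloppier write-up could go wrong.
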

This result proves that every ideal in a given ideal class has the same normalized product distance. It follows that given a ring of integers $\OO_K$, it is enough to check one ideal from every ideal class to find the optimal ideal. Given an ideal $I$ we will denote with $[I]$ the ideal class where ideal $I$ belongs.

Let us denote with $N_{min}(K)$  the norm of an ideal $A$ in $K$ with the property that every ideal class of $K$ contains an integral ideal with norm $N(A)$ or smaller. 

\begin{prop}\label{idealform}
Let us suppose that $K$ is a totally complex number field  and that $I$ is an ideal that maximizes the normalized product distance over  all
ideals in $K$. We then have that
$$
\mathrm{Nd_{p, min}}(\psi(I))= \frac{2^{n/2}\sqrt{N_{min}(K)}}{|d_K|^{\frac{1}{4}}}.
$$
\end{prop}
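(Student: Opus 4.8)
The plan is to turn the maximization of $\mathrm{Nd_{p,min}}(\psi(I))$ over all ideals $I$ into a statement purely about ideal classes, and then to recognize the resulting quantity as $N_{min}(K)$. By the proposition expressing the normalized product distance of $\psi(I)$ through $\mathrm{min}(I)$, we have $\mathrm{Nd_{p,min}}(\psi(I)) = \frac{2^{n/2}}{|d_K|^{1/4}}\,\mathrm{min}(I)$, so it is equivalent to maximize $\mathrm{min}(I)$. By the lemma on class invariance, $\mathrm{Nd_{p,min}}(\psi(\cdot))$, hence also $\mathrm{min}(\cdot)$, depends only on the ideal class $[I]$; since $K$ has finitely many ideal classes and each class contains an integral ideal of least norm, the maximum is attained and a maximizing ideal $I$ indeed exists.

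The first substantive step is the identity
$$\mathrm{min}(I)^2 = \min\{\,\mathrm{N}(J) : J \text{ an integral ideal with } [J]=[I]^{-1}\,\}.$$
To prove it I would use the correspondence $x \mapsto J_x := xI^{-1}$. If $0\neq x\in I$ then $I \mid x\OO_K$, so $J_x$ is an integral ideal lying in the class $[I]^{-1}$, and from multiplicativity of the ideal norm together with $\mathrm{N}(x\OO_K)=|\mathrm{nr}_{K/\Q}(x)|$ one gets $\mathrm{N}(J_x) = |\mathrm{nr}_{K/\Q}(x)|/\mathrm{N}(I)$. Conversely, given any integral ideal $J$ with $[J]=[I]^{-1}$, the product $IJ$ is principal, say $IJ = x\OO_K$, and $IJ\subseteq I$ forces $x\in I$, again with $\mathrm{N}(J)=|\mathrm{nr}_{K/\Q}(x)|/\mathrm{N}(I)$. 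Thus the sets $\{\,|\mathrm{nr}_{K/\Q}(x)|/\mathrm{N}(I):0\neq x\in I\,\}$ and $\{\,\mathrm{N}(J):J \text{ integral},\ [J]=[I]^{-1}\,\}$ coincide; since $\mathrm{min}(I)$ is by definition the minimum of $\sqrt{|\mathrm{nr}_{K/\Q}(x)|/\mathrm{N}(I)}$ over nonzero $x\in I$, taking minima yields the displayed identity.

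To finish, I would let $[I]$ range over the class group. As inversion is a bijection on the class group,
$$\max_{I}\ \mathrm{min}(I)^2 \;=\; \max_{\mathfrak{c}}\ \min\{\,\mathrm{N}(J):J \text{ integral},\ [J]=\mathfrak{c}\,\},$$
where $\mathfrak{c}$ runs over all ideal classes. The right-hand side is exactly $N_{min}(K)$: by definition, $N_{min}(K)$ is the least ideal norm $\mathrm{N}(A)$ such that every ideal class contains an integral ideal of norm at most $\mathrm{N}(A)$, and this least value is precisely the maximum over $\mathfrak{c}$ of the smallest ideal norm occurring in $\mathfrak{c}$, attained in the class of $A$. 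Substituting $\mathrm{min}(I)^2 = N_{min}(K)$ into $\mathrm{Nd_{p,min}}(\psi(I)) = \frac{2^{n/2}}{|d_K|^{1/4}}\mathrm{min}(I)$ gives the asserted formula. I expect the main obstacle — indeed the only genuinely non-formal point — to be the set identity of the middle step: it is elementary, but it has to be assembled carefully from the standard facts of ideal theory (unique factorization of ideals, multiplicativity of $\mathrm{N}$, $\mathrm{N}(x\OO_K)=|\mathrm{nr}_{K/\Q}(x)|$, and the fact that a principal ideal $IJ$ with $IJ\subseteq I$ has a generator inside $I$); once it is in place, the identification with $N_{min}(K)$ is immediate from the definition.
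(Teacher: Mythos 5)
Your proof is correct and follows essentially the same route as the paper: both arguments rest on the correspondence $x\OO_K = I\cdot(xI^{-1})$ between nonzero $x\in I$ and integral ideals in the class $[I]^{-1}$, together with $\mathrm{N}(x\OO_K)=|\mathrm{nr}_{K/\Q}(x)|$ and multiplicativity of the norm. The only difference is organizational: the paper proves the upper bound for an arbitrary ideal and then exhibits an ideal in the inverse of the worst class attaining it, whereas you package the two directions as the single identity $\mathrm{min}(I)^2=\min\{\mathrm{N}(J):[J]=[I]^{-1}\}$ and then maximize over ideal classes.
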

\begin{proof}
Let us suppose that $L$ is any 
%non-principal 
ideal in $K$. Let us also suppose that $A$ is an integral ideal in class  $[L]^{-1}$ with the smallest norm. We then have that there exists an element $y \in \OO_K$ such that $y\OO_K=AL$. As $n(\psi(y))=\sqrt{N(L)N(A)}$ and $N(A)\leq N_{min}(K)$ we have that
$\mathrm{d_{p,min}}(L) \leq \sqrt{N(L)N_{min}(K)}$ and $\mathrm{Nd_{p, min}}(L)\leq  \frac{\sqrt{N_{min}(K)}2^{n/2}}{|d_K|^{1/4}}$. 

Let us assume that $S$ is such an ideal that $N(S)=N_{min}(K)$ and choose $I$ as an element from class
$[S]^{-1}$. Let us now suppose that $x$ is any non-zero element of $I$. We then have that $x\OO_K=IC$, for some ideal $C$ that belongs to class
$[S]$. Therefore we have that $n(\psi(x))\geq \sqrt{N(I)N(C)}$.
\end{proof}
This result translates the question of product distance of an ideal to well known problem in algebraic number theory. It does also describes which ideal class we should use in order to maximize the product distance.

Let us denote with $\mathcal{K}_{2n}$ the set of totally complex number fields  of degree $2n$. We then have that the 
optimal normalized product distance over all degree $2n$-complex fields and all ideals $I$ is
$$
\underset{K\in \mathcal{K}_{2n }}{\mathrm{min}} \frac{2^{n/2}\sqrt{N_{min}(K)}}{|d_K|^{\frac{1}{4}}}.
$$

The following theorem by Zimmert \cite{Zim} then gives us an upper bound of what can be achieved with this method.

\begin{theorem}
Assuming that we have a number field $K$ with    signature $(r_1,r_2)$ we have that
$$
N_{min}(K)\leq ((50.7)^{r_1/2} (19.9)^{r_2})^{-1}\sqrt{|d_K|},
$$
when  $[K:\Q]$ is large enough.
\end{theorem}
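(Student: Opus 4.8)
The plan is to derive this from Zimmert's analytic estimates \cite{Zim}; I sketch how the argument is organized. First I would record the easy, geometric half, which already gives a bound of the right shape but with a worse constant. Given an ideal class, pick an integral ideal $J$ lying in the inverse class and apply Minkowski's convex body theorem to the lattice $\psi(J)\subset\C^n$, whose covolume is $\mathrm{N}(J)2^{-n}\sqrt{|d_K|}$ by Lemma \ref{idealvolume}, inside a ball of suitable radius; this produces a nonzero $\alpha\in J$, and then $\alpha J^{-1}$ is an integral ideal in the prescribed class with $\mathrm{N}(\alpha J^{-1})=|\mathrm{nr}_{K/\Q}(\alpha)|/\mathrm{N}(J)$, which the arithmetic--geometric mean inequality bounds by a constant times $\sqrt{|d_K|}$. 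For a totally complex field this route only yields an exponential base of about $\pi e^{2}/4\approx 5.8$ rather than $19.9$, so a sharper input is needed.

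The heart of the proof is analytic. Fix an ideal class $[I]$ and form the partial Dedekind zeta function $\zeta_{[I]}(s)=\sum_{J\in[I]}\mathrm{N}(J)^{-s}$. Using a theta transformation for the lattices $\psi(J)$, I would complete $\zeta_{[I]}$ with the $r_1$ real and $r_2$ complex $\Gamma$-factors and the factor $|d_K|^{s/2}$ so that it satisfies a clean functional equation $s\leftrightarrow 1-s$, the dual side being the partial zeta function of a companion class determined by $[I]$ and the different. I would then run a Mellin inversion / contour shift against a test function $f$ chosen so that both $f$ and its transform enjoy the appropriate positivity: this gives a lower bound for $\sum_{J\in[I]}f(\log\mathrm{N}(J))$ in terms only of the residue at $s=1$ and the Archimedean factors, the contribution of the nontrivial zeros being nonnegative and hence harmlessly discarded. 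If every ideal of $[I]$ had norm exceeding $M\sqrt{|d_K|}$ with $M$ too large, this lower bound would be violated, and the critical value of $M$ is precisely the claimed constant.

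The step I expect to be the main obstacle is the optimization of the test function $f$, which is the technical core of \cite{Zim}. Since the Archimedean factor is a product over the places, the variational problem decouples into a one-place problem for $\R$ and one for $\C$; solving each — balancing the growth of the relevant $\Gamma$-factor against the decay imposed on $f$ — produces the per-place constants whose product is $(50.7)^{r_1/2}(19.9)^{r_2}$. I would also need to check that the contributions of the pole at $s=1$, of the conductor $|d_K|$, and of the finitely many low-lying zeros are dominated by the main term, which is exactly what forces the hypothesis that $[K:\Q]$ be sufficiently large; for the verification of these estimates with explicit numerical constants I would refer to \cite{Zim}.
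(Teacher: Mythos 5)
This theorem is not proved in the paper at all: it is stated as a known result and quoted directly from Zimmert \cite{Zim}, and your proposal, which defers the decisive test-function optimization and the numerical constants $50.7$ and $19.9$ to that same reference, amounts to essentially the same appeal to the literature. Your outline of the analytic machinery (class-by-class partial zeta functions, the functional equation involving the different, positivity that lets the zero contributions be discarded, a per-place optimization explaining the exponents $r_1/2$ and $r_2$, and the large-degree hypothesis absorbing lower-order terms) is a faithful description of how Zimmert's argument goes, so nothing is wrong --- but as it stands it is a citation rather than an independent derivation, exactly as in the paper.
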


\begin{cor}\label{idealbound}
Given a totally complex field $K$ of  degree $2n$  and any ideal $I \subset K$ we have that
$$
\mathrm{Nd_{p, min}}(I)\leq (3.1)^{-n},
$$
when $n$ is large enough.
\end{cor}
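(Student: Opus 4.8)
The plan is to read this off directly from Proposition~\ref{idealform} together with Zimmert's theorem, so the argument is short. First I would invoke Proposition~\ref{idealform}: for the ideal $I$ that maximizes the normalized product distance over all ideals of $K$ we have the exact formula $\mathrm{Nd_{p, min}}(\psi(I)) = \frac{2^{n/2}\sqrt{N_{min}(K)}}{|d_K|^{1/4}}$, and hence for \emph{every} ideal $J \subset K$ the bound $\mathrm{Nd_{p, min}}(\psi(J)) \leq \frac{2^{n/2}\sqrt{N_{min}(K)}}{|d_K|^{1/4}}$ holds. So it suffices to control $N_{min}(K)$.

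Next I would apply Zimmert's theorem. Since $K$ is totally complex of degree $2n$, its signature is $(r_1,r_2)=(0,n)$, so the theorem gives
$$
N_{min}(K) \leq (19.9)^{-n}\sqrt{|d_K|}
$$
once $[K:\Q]=2n$ is large enough. Substituting this into the bound from the previous paragraph, the discriminant cancels:
$$
\mathrm{Nd_{p, min}}(\psi(J)) \leq \frac{2^{n/2}\,(19.9)^{-n/2}\,|d_K|^{1/4}}{|d_K|^{1/4}} = \left(\frac{2}{19.9}\right)^{n/2} = \left(\sqrt{\tfrac{2}{19.9}}\right)^{n}.
$$

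Finally I would close with the elementary numerical check $\sqrt{2/19.9} \approx 0.317 < 0.3226 \approx 1/3.1$ (equivalently $2 \cdot 3.1^2 = 19.22 < 19.9$), which yields $\mathrm{Nd_{p, min}}(\psi(J)) \leq (3.1)^{-n}$ for all $n$, the "large enough" hypothesis being inherited verbatim from Zimmert's theorem. There is really no serious obstacle here; the only thing to be careful about is that the cancellation of $|d_K|^{1/4}$ is exact and that the constant $3.1$ is genuinely below $\sqrt{19.9/2}$, leaving a small margin. (One could note that the true constant obtainable this way is $\sqrt{19.9/2}\approx 3.15$, and $3.1$ is a conservative round number.)
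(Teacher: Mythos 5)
Your proposal is correct and is exactly the argument the paper intends: combine Proposition~\ref{idealform} (which bounds every ideal by the maximizing one) with Zimmert's theorem specialized to signature $(0,n)$, note the exact cancellation of $|d_K|^{1/4}$, and check $\sqrt{2/19.9}\approx 0.317 < 1/3.1$, with the "large enough $n$" condition inherited from Zimmert. No issues.
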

%Assuming that we have a family of fields  $K_{2n}$ having ideals achieving Corollary \ref{idealbound}, we would have that any rates satisfying
%$$ R < \log_2(P) - \log_2\left(\frac{2G}{\pi e}\right)$$
%are achieved with the presented number field construction.

\subsection{Ideals in totally real fields}
Let us now state the analogous results for the totally real case. 

\begin{lem}
Let us suppose that $K$ is a totally real field of degree $n$ and that  $I$ is an integral ideal in $K$. We then have that $\psi(I)$ is
a $n$-dimensional lattice in $\R^n$ and that
$$
\Vol{(\psi(I))}=[\OO_K:I]\sqrt{|d_K|}.
$$
\end{lem}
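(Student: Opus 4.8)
The plan is to reduce this to Lemma~\ref{real}, which already gives $\Vol(\psi(\OO_K)) = \sqrt{|d_K|}$, by exploiting the elementary fact that a non-zero integral ideal is a finite-index subgroup of $\OO_K$. So the statement is the exact totally real analogue of Lemma~\ref{idealvolume}, and the proof structure should mirror it.

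First I would recall that since $I$ is a non-zero integral ideal, the quotient $\OO_K/I$ is a finite abelian group of order $[\OO_K:I] = \mathrm{N}(I)$; in particular $I$ is a free $\ZZ$-module of rank $n$, so $\psi(I)$ is a full-rank lattice in $\R^n$ (its $\R$-span is all of $\R^n$ because it already contains the full lattice $\mathrm{N}(I)\cdot\psi(\OO_K)$). Next, fix a $\ZZ$-basis $\{u_1,\dots,u_n\}$ of $\OO_K$ and a $\ZZ$-basis $\{v_1,\dots,v_n\}$ of $I$, and write $v_i = \sum_j m_{ij} u_j$ with $M = (m_{ij}) \in M_n(\ZZ)$. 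By the Smith normal form (equivalently, the structure theorem for finitely generated abelian groups applied to $\OO_K/I$), one has $|\det M| = [\OO_K : I]$.

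Then I would observe that, applying $\psi$ componentwise, a generator matrix of the lattice $\psi(I)$ is obtained by multiplying a generator matrix of $\psi(\OO_K)$ on the left by $M$, so that
$$\Vol(\psi(I)) = |\det M|\, \Vol(\psi(\OO_K)) = [\OO_K:I]\,\sqrt{|d_K|},$$
where the last equality is Lemma~\ref{real}. Equivalently, one can phrase this in terms of discriminants: $\Vol(\psi(I))^2$ equals the absolute value of the discriminant of any $\ZZ$-basis of $I$, which is $\mathrm{N}(I)^2\, d_K$ by the standard formula for the discriminant of an ideal, and taking square roots gives the claim.

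I do not expect a genuine obstacle here: the only point requiring care is the index-versus-determinant bookkeeping $|\det M| = [\OO_K:I]$, which is a classical fact. The same computation, carried out with the \emph{relative} canonical embedding, reproduces Lemma~\ref{idealvolume} with the extra factor $2^{-n}$ arising from the Jacobian of the identification of each $\C$ coordinate with $\R^2$.
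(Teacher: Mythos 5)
Your proof is correct and is exactly the standard argument the paper implicitly relies on: the paper states this volume formula (like its complex counterpart) as a well-known fact without proof, and your reduction to $\Vol(\psi(\OO_K))=\sqrt{|d_K|}$ via the index--determinant identity $|\det M|=[\OO_K:I]$ (equivalently, the discriminant formula $\mathrm{disc}(I)=\mathrm{N}(I)^2 d_K$) is the canonical way to fill it in. No gaps; the side remarks on full rank via $\mathrm{N}(I)\psi(\OO_K)\subseteq\psi(I)$ and on the $2^{-n}$ factor in the complex case are also accurate.
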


\begin{prop}\cite{FOV}
Let us suppose that $K$ is a totally real degree $n$ number field and that $I$ is an integral ideal of $K$. We then have that
\begin{equation}
\mathrm{Nd_{p,min}}(\psi(I))=\frac{1}{|d_K|^{\frac{1}{2}}}\mathrm{min}(I),
\end{equation}
where $\mathrm{min}(I):=\underset{x\neq 0\in I}{\mathrm{min}}\frac{|\mathrm{nr}_{K/\Q}(x)|}{\mathrm{N}(I)}.$
\end{prop}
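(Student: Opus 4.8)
The plan is to mirror the proof of the totally complex analogue above: this proposition, like its complex counterpart, is in the end a restatement of the definitions once the volume of $\psi(I)$ is in hand. First I would invoke the scaling law for a full real lattice $L \subset \R^n$, namely $\mathrm{Nd_{p,min}}(L) = \mathrm{d_{p,min}}(L)/\Vol(L)$, together with the volume formula from the lemma immediately above, $\Vol(\psi(I)) = [\OO_K:I]\sqrt{|d_K|} = \mathrm{N}(I)\sqrt{|d_K|}$. This reduces the claim to identifying $\mathrm{d_{p,min}}(\psi(I))$ with $\min_{x \in I\setminus\{0\}} |\mathrm{nr}_{K/\Q}(x)|$.

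For that identification I would note that, since $K$ is totally real of degree $n$, the canonical embedding $\psi(x) = (\sigma_1(x),\dots,\sigma_n(x))$ uses all $n$ (real) embeddings, so the product norm of a lattice point is exactly the absolute value of the field norm: $n(\psi(x)) = \prod_{i=1}^n |\sigma_i(x)| = |\mathrm{nr}_{K/\Q}(x)|$. Taking the infimum over nonzero $x \in I$ gives $\mathrm{d_{p,min}}(\psi(I)) = \inf_{x \in I\setminus\{0\}} |\mathrm{nr}_{K/\Q}(x)|$, and this infimum is attained: for $x \in I\setminus\{0\}$ one has $x\OO_K = IC$ for an integral ideal $C$, so $|\mathrm{nr}_{K/\Q}(x)| = \mathrm{N}(I)\mathrm{N}(C)$ is a positive integer multiple of $\mathrm{N}(I)$; hence $|\mathrm{nr}_{K/\Q}(x)|/\mathrm{N}(I)$ ranges over a nonempty subset of the positive integers and therefore has a least element. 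In particular $\mathrm{min}(I)$ is a genuine minimum, and $\mathrm{min}(I) = \mathrm{d_{p,min}}(\psi(I))/\mathrm{N}(I)$.

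Combining the two steps,
\begin{align*}
\mathrm{Nd_{p,min}}(\psi(I)) = \frac{\mathrm{d_{p,min}}(\psi(I))}{\Vol(\psi(I))} = \frac{1}{\sqrt{|d_K|}}\cdot\frac{\min_{x \in I\setminus\{0\}}|\mathrm{nr}_{K/\Q}(x)|}{\mathrm{N}(I)} = \frac{\mathrm{min}(I)}{\sqrt{|d_K|}},
\end{align*}
which is the assertion; specializing $I=\OO_K$ recovers Lemma \ref{real}. There is no genuine obstacle here — the only point worth making explicit is the passage from infimum to minimum, handled above by the ideal-theoretic remark that the quantities $|\mathrm{nr}_{K/\Q}(x)|/\mathrm{N}(I)$ are positive integers; everything else is bookkeeping with the definitions and the volume lemma.
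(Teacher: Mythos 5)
Your proof is correct and follows essentially the same route the paper takes (it states this real case by citation, but its proof of the complex analogue is exactly your argument: the ideal-volume lemma, the scaling law for $\mathrm{Nd_{p,min}}$, and the identification $n(\psi(x))=|\mathrm{nr}_{K/\Q}(x)|$, which in the totally real case needs no square root). Your additional observation that the infimum is attained because $|\mathrm{nr}_{K/\Q}(x)|/\mathrm{N}(I)$ is a positive integer is a harmless and welcome extra detail, not a departure from the paper's approach.
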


\begin{prop} \label{real_idealform} 
Let us suppose that $K$ is a totally real number field  and that $I$ is such an ideal that it maximizes the normalized product distance over  all
ideals in $K$. We then have that
$$
\mathrm{Nd_{p, min}}(\psi(I))= \frac{N_{min}(K)}{|d_K|^{\frac{1}{2}}}.
$$
\end{prop}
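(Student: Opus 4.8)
The plan is to run the same argument as for the totally complex case (Proposition~\ref{idealform}), now feeding in the totally real ingredients: the formula $\mathrm{Nd_{p,min}}(\psi(I))=\mathrm{min}(I)/|d_K|^{1/2}$ stated just above for the totally real case, together with the identity $|\mathrm{nr}_{K/\Q}(x)|=n(\psi(x))$ — which here holds with \emph{no} square root, since $\psi$ records all $n$ real embeddings and $n(\psi(x))=\prod_{i=1}^n|\sigma_i(x)|=|\mathrm{nr}_{K/\Q}(x)|$. Given these, it suffices to show that the supremum of $\mathrm{min}(I)$ over all integral ideals $I$ of $K$ equals $N_{min}(K)$ and that it is attained; dividing by $|d_K|^{1/2}$ then gives the claim.

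First the upper bound. Let $L$ be an arbitrary ideal and let $A$ be an integral ideal of least norm in the class $[L]^{-1}$, so $N(A)\le N_{min}(K)$ by definition of $N_{min}(K)$. The product $AL$ is principal, say $AL=y\OO_K$ with $0\ne y\in AL\subseteq L$. Using $N(a\OO_K)=|\mathrm{nr}_{K/\Q}(a)|$ and multiplicativity of the ideal norm, $n(\psi(y))=|\mathrm{nr}_{K/\Q}(y)|=N(A)N(L)$, hence $\mathrm{min}(L)\le n(\psi(y))/N(L)=N(A)\le N_{min}(K)$. For the matching lower bound, choose an ideal $S$ with $N(S)=N_{min}(K)$ that has minimal norm in its ideal class, and let $I$ be any ideal in $[S]^{-1}$. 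For $0\ne x\in I$ the ideal $C:=x\OO_K\,I^{-1}$ is integral and lies in $[S]$, so $N(C)\ge N(S)=N_{min}(K)$; therefore $n(\psi(x))=|\mathrm{nr}_{K/\Q}(x)|=N(I)N(C)\ge N(I)N_{min}(K)$, whence $\mathrm{min}(I)\ge N_{min}(K)$. Combining the two estimates, this $I$ achieves $\mathrm{Nd_{p,min}}(\psi(I))=N_{min}(K)/|d_K|^{1/2}$ and no ideal does better.

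None of this is genuinely hard — as the paper remarks about the companion statement, it amounts to unwinding the definitions. The one place that needs attention is the choice of representative: for the lower bound to read $N(C)\ge N_{min}(K)$ rather than the trivial $N(C)\ge 1$, one must take $S$ to be a least-norm ideal in a class where that minimum actually equals $N_{min}(K)$, and then work in $[S]^{-1}$. This is exactly what pins down which ideal class should be used to maximize the product distance, and it is the only subtlety; everything else is multiplicativity of the ideal norm and the definitions of $\mathrm{min}(I)$ and $N_{min}(K)$.
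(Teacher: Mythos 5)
Your proof is correct and follows essentially the same route as the paper, which proves the complex case (Proposition \ref{idealform}) by exactly this two-sided argument and leaves the totally real analogue as a restatement with the square roots removed. If anything, you are slightly more careful than the paper's complex-case proof at the one delicate point — insisting that $S$ be a least-norm ideal in a class whose minimal norm actually equals $N_{min}(K)$, so that $N(C)\geq N_{min}(K)$ rather than just $N(C)\geq 1$ — a detail the paper leaves implicit.
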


Let us denote with $\mathcal{K}_{n}$ the set of totally real  number fields  of degree $n$. We then have that the 
optimal normalized product distance over all degree $n$ real fields and all ideals $I$ is
$$
\underset{K\in \mathcal{K}_{n }}{\mathrm{min}} \frac{N_{min}(K)}{|d_K|^{\frac{1}{2}}}.
$$

\begin{cor}\label{realideal}
Given a totally real number field $K$ of degree $n$  and any ideal $I$ we have that
$$
\mathrm{Nd_{p, min}}(I)\leq (7.12)^{-n},
$$
when $n$  is large enough.
\end{cor}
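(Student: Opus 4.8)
The plan is to combine Proposition \ref{real_idealform} with Zimmert's bound specialized to the totally real signature. First I would recall that by Proposition \ref{real_idealform}, for a fixed totally real field $K$ of degree $n$ the largest normalized product distance achievable among all ideals of $K$ equals $N_{min}(K)/|d_K|^{1/2}$; hence for \emph{any} ideal $I\subset K$ we have the bound $\mathrm{Nd_{p,min}}(\psi(I))\le N_{min}(K)/|d_K|^{1/2}$, and it suffices to estimate this quantity uniformly in $K$.

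Next I would apply Zimmert's theorem to a totally real field, i.e. with signature $(r_1,r_2)=(n,0)$. This gives, once $[K:\Q]=n$ is large enough,
$$
N_{min}(K)\le \bigl((50.7)^{n/2}\bigr)^{-1}\sqrt{|d_K|}=(50.7)^{-n/2}\sqrt{|d_K|}.
$$
Dividing through by $|d_K|^{1/2}$, the discriminant cancels and we are left with $N_{min}(K)/|d_K|^{1/2}\le (50.7)^{-n/2}$, which is a bound independent of the particular field.

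Finally I would rewrite $(50.7)^{-n/2}=(\sqrt{50.7})^{-n}$ and observe that $\sqrt{50.7}>7.12$ (since $7.12^2=50.6944<50.7$), so that $(50.7)^{-n/2}<(7.12)^{-n}$. Chaining the inequalities yields $\mathrm{Nd_{p,min}}(\psi(I))\le (7.12)^{-n}$ for all large $n$, as claimed; the phrase "when $n$ is large enough" is inherited directly from the hypothesis of Zimmert's theorem. There is essentially no serious obstacle here: the only points requiring care are invoking Zimmert's bound with the correct signature and checking the elementary numerical inequality $\sqrt{50.7}>7.12$; everything else is a direct substitution into the already-established formula of Proposition \ref{real_idealform}.
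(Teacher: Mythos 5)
Your proof is correct and matches the paper's intended derivation: the corollary is exactly the combination of Proposition \ref{real_idealform} with Zimmert's theorem at signature $(r_1,r_2)=(n,0)$, followed by the numerical observation $\sqrt{50.7}>7.12$. Nothing further is needed.
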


\subsection{Final remarks}

\begin{rem}
%We note that Proposition \ref{idealform} is obvious from the algebraic number theory point of view. 
%In that context  this change of view is done usually to 
The relation in Propositions \ref{idealform} and \ref{real_idealform} can be used in the opposite direction to  derive bounds for the value of $N_{min}(K)$ from product distance bounds. Therefore Corollaries \ref{idealbound}  and \ref{realideal}
are just  better versions of the Minkowski bound already given in \cite[Section 2.4]{Oggier}.  However, to state our corollaries one has to go through an argument similar to Propositions \ref{idealform} and \ref{real_idealform}, as  the theorem of Zimmert used ideal-theoretic and analytic methods that are  not directly applicable to the normalized minimum determinant problem. This is in contrast to the Minkowski bound, which is completely general and is based on geometry of numbers.

Moreover, the formulation given in Propositions \ref{idealform} and \ref{real_idealform} can be very beneficial when studying what can be achieved using non-principal ideals. For example when the class number is $2$, the value of $N_{min}(K)$ is simply the smallest norm among the non-principal ideals in $K$. Just as well this result describes which ideal class we should use.
\end{rem}

\begin{rem}
As was already pointed out in \cite{FOV} and \cite[p. 52]{Oggier} there is no guarantee that we can really gain something by using non-principal ideals. While it is true that in number fields with class number greater than one, using a non-principal ideal does give us some gain, this gain may not be enough to compensate for the possibly large discriminant of these number fields. This trade-off is clear from Proposition \ref{idealform}.

\end{rem}

\begin{rem}
We note that this ideal approach can  already be used to increase achievable rates of the number field constructions in Sections \ref{complex_fast_fading} and \ref{real_fast_fading}.
This is due to the fact that all the fields in the Martinet families have class number larger than 1 and therefore the corresponding rings of algebraic integers are not principal ideal domains.
\end{rem}

\bibliographystyle{IEEEtran}

\begin{thebibliography}{10}

\bibitem{DeBuda} R. de Buda, \vv{Some optimal codes have structure}, \emph{IEEE J. Select. Areas Commun.}, vol. 7, pp. 893-–899, Aug. 1989.

\bibitem{Urbanke_Rimoldi} R. Urbanke and B. Rimoldi, \vv{Lattice codes can achieve capacity on the AWGN channel}, \emph{IEEE Trans. Inform. Theory}, vol. 44, pp. 273–278, Jan. 1998.

\bibitem{LV2014} L. Luzzi, R. Vehkalahti, ``Division algebra codes achieve MIMO block fading channel capacity within a constant gap'', preprint, submitted to ISIT 2015, available at http://arxiv.org/abs/1412.7650
 
 %\bibitem{Hajir_Maire} F. Hajir and C. Maire, ``Tamely ramified towers and discriminant
%bounds for number fields'', Comp. Math., vol. 128, pp. 35–53, 2001.

\bibitem{Loeliger} H. A. Loeliger, \vv{Averaging bounds for lattices and linear codes}, \emph{IEEE Trans. Inform. Theory}, vol. 43, pp. 1767–-1773, Nov. 1997.

\bibitem{Erez_Zamir} U. Erez and R. Zamir, \vv{Achieving $1/2 \log(1+SNR)$ on the AWGN channel with lattice encoding and decoding}, \emph{IEEE Trans. Inf. Theory},
vol. 50, no. 10, pp. 2293–-2314, oct. 2004.

\bibitem{Forney_Trott_Chung} G. Forney, M. Trott, and S.-Y. Chung, \vv{Sphere-bound-achieving coset codes and multilevel coset codes}, \emph{IEEE Trans. Inf. Theory}, vol. 46, no. 3, pp. 820-–850, May 2000.


\bibitem{Yan_Ling_Wu} Y. Yan, C. Ling, and X. Wu, \vv{Polar lattices: Where Arikan meets Forney}, \emph{IEEE Int. Symp. Inf. Theory} (ISIT), Jul. 2013, pp.
1292–-1296.

\bibitem{BERB} J. Boutros, E. Viterbo, C. Rastello and J.-C. Belfiore,
``Good lattice constellations for both Rayleigh fading and Gaussian channels,'' \emph{{IEEE} Trans. Inf.
  Theory}, vol. 52, no. 2, March 1996.

\bibitem{FOV} E. Bayer-Fluckiger, F. Oggier, and E. Viterbo,
``Algebraic Lattice Constellations: Bounds on Performance'', {\it
 IEEE Trans. Inform. Theory}, vol. 52, n. 1, pp. 319--327,
 Jan. 2006.

\bibitem{LT}  S.N. Litsyn, M.A. Tsfasman, ``Constructive high-dimensional sphere packings'',
Duke Math. J. 54 (1987), no. 1, pp. 147–-161.

\bibitem{Xing} C. Xing,  ``Diagonal Lattice Space-Time Codes From Number Fields and Asymptotic Bounds'',  {\it IEEE Trans.  Inform. Theory}, vol.53, pp. 3921--3926,  November 2007.

\bibitem{GL}  P. M. Gruber and C. G. Lekkerkerker, \emph{Geometry of Numbers}, Elsevier, Amsterdam,
The Netherlands,  1987.


\bibitem{Martinet} J. Martinet, \vv{Tours de corps de classes et estimations de discriminants}, \emph{Invent. Math.} n. 44, 1978, pp. 65--73. 

\bibitem{Laurent_Massart} B. Laurent, P. Massart, \vv{Adaptive estimation of a quadratic functional by model selection}, \emph{Annals of Statistics}, vol. 28, pp. 1302--1338, 2000.

\bibitem{Proakis} J. Proakis, \emph{Digital communications}, 4th edition, McGraw-Hill 2001.

 
\bibitem{ONBP_2002} O. Oyman, R. Nabar, H. B\"olcskei, and A. Paulraj, \vv{Tight Lower Bounds on the Ergodic Capacity of Rayleigh Fading MIMO Channels}, \emph{Proc. of IEEE Global Telecommunications Conference (GLOBECOM)}, Nov.
2002, pp. 1172-–1176.







%\bibitem{WWX} G. Wang, H. Liao, H. Wang, and X.-G. Xia, ``Systematic and Optimal Cyclotomic Lattices and Diagonal Space-Time Block Code Designs'',\emph{IEEE Trans. Inform. Theory},  vol. 50,  pp. 3348--3360, Dec. 2004


%\bibitem{VLL2013} R.Vehkalahti, H.-f. Lu, L. Luzzi,  ``Inverse Determinant Sums and Connections Between Fading Channel Information Theory and Algebra'', \emph{IEEE Trans. Inform. Theory}, vol 59,  pp. 6060--6082, September 2013.


%\bibitem{BOV} E. Bayer-Fluckiger, F. Oggier and E. Viterbo, ``New algebraic constructions of rotated $Z^n$-lattice constellations for the Rayleigh fading channel,'' \emph{{IEEE} Trans. on Inform. Theory}, vol. 50, no. 4, pp. 702--714,  Apr. 2004.

\bibitem{Hajir_Maire} F. Hajir and C. Maire, \vv{Asymptotically good towers of global fields}, \emph{Proc. European Congress of Mathematics}, pp. 207--218, Birkh\"auser Basel, 2001.





\end{thebibliography}

\begin{thebibliography}{10}
\bibitem{Zim}  R. Zimmert, ``Ideale kleiner Norm in Idealklassen und eine Regulatorabsch{\"a}tchung'',  \emph{Invent. Math.} n. 62, pp. 367--380, 1980.

\bibitem{Oggier} F. Oggier, ``Algebraic methods for channel coding'', PhD thesis, EPFL, Lausanne,  2005. 




\end{thebibliography}
{\small

\end{document}